\documentclass[a4paper,twocolumn,11pt,accepted=2022-05-17]{quantumarticle}
\pdfoutput=1
\usepackage[utf8]{inputenc}
\usepackage[english]{babel}
\usepackage[T1]{fontenc}
\usepackage{amsmath}
\usepackage{hyperref}
\usepackage{amsmath,bbm}
\usepackage{amsthm}
\usepackage{latexsym}
\usepackage{amssymb}

\usepackage{tikz}
\usepackage{lipsum}
\usepackage{float}
\usepackage{graphicx}           
\usepackage{color}
\usepackage{comment}
\usepackage{mathtools}

\DeclarePairedDelimiter\floor{\lfloor}{\rfloor}

%\usepackage{changes}

%\linespread{1.23}

\newcommand{\be}{\begin{equation}}
	\newcommand{\ee}{\end{equation}}
\newcommand{\bea}{\begin{eqnarray}}
	\newcommand{\eea}{\end{eqnarray}}
\allowdisplaybreaks

\def\squareforqed{\hbox{\rlap{$\sqcap$}$\sqcup$}}
\def\qed{\ifmmode\squareforqed\else{\unskip\nobreak\hfil
		\penalty50\hskip1em\null\nobreak\hfil\squareforqed
		\parfillskip=0pt\finalhyphendemerits=0\endgraf}\fi}
\def\endenv{\ifmmode\;\else{\unskip\nobreak\hfil
		\penalty50\hskip1em\null\nobreak\hfil\;
		\parfillskip=0pt\finalhyphendemerits=0\endgraf}\fi}

\newtheorem{thm}{Theorem}%[section]
%\newreptheorem{thm}{Theorem}
\newtheorem{lemma}{Lemma}

\newtheorem{assumption}{Assumption}

\begin{document}

\title{Robust certification of arbitrary outcome  quantum measurements from temporal correlations}

\author{Debarshi Das}
\email{dasdebarshi90@gmail.com}
\affiliation{S. N. Bose National Centre for Basic Sciences, Block JD, Sector III, Salt Lake, Kolkata 700106, India }
\affiliation{Department of Physics and Astronomy, University College London, Gower Street, WC1E 6BT London, UK}

\author{Ananda G. Maity}
\email{anandamaity289@gmail.com}
\affiliation{S. N. Bose National Centre for Basic Sciences, Block JD, Sector III, Salt Lake, Kolkata 700106, India }

\author{Debashis Saha}
\email{saha@bose.res.in}
\affiliation{S. N. Bose National Centre for Basic Sciences, Block JD, Sector III, Salt Lake, Kolkata 700106, India }

\author{A. S. Majumdar}
\email{archan@bose.res.in}
\affiliation{S. N. Bose National Centre for Basic Sciences, Block JD, Sector III, Salt Lake, Kolkata 700106, India }

\begin{abstract}
Certification of quantum devices received from  unknown providers is a primary  requirement before utilizing the devices for any information processing  task.  Here, we establish a protocol for certification of  a particular set of $d$-outcome quantum measurements (with $d$ being arbitrary) in a setup comprising of a preparation followed by two measurements in sequence. We propose a set of temporal inequalities pertaining to different $d$ involving correlation functions corresponding to successive measurement outcomes, that are not satisfied by quantum devices. Using quantum violations of these inequalities, we certify  specific $d$-outcome quantum measurements  under some minimal assumptions which can be met in an experiment efficiently. Our certification protocol neither requires entanglement, nor any prior knowledge about the dimension of the system under consideration. We further show that our protocol is robust against  practical non-ideal realizations. Finally, as an offshoot of our protocol, we present a scheme for secure certification of genuine quantum randomness.
\end{abstract}

\maketitle

\section{Introduction} 
With the rapid development of quantum information science along with its 
multi-faceted applications in  communication and cryptographic protocols, guaranteeing the functioning of quantum devices received from untrusted providers becomes one of the basic requirements for modern quantum technologies. The task of ensuring the proper functioning of the quantum devices can be designed by utilizing the intrinsic features of quantum physics through certification or  verification protocols. Several certification protocols have been designed till date with
the desiderata of efficiency, security and less resource consumption. Tomography \cite{Tomo1,Tomo2,Tomo3}, randomized benchmarking \cite{RB1,RB2,RB3,Eisert_review} and self-testing \cite{my'98,my'04,Supic_review} are notable among them.

%Characterizing an unknown physical process is one of the fundamental aspects of any physical theory. 
Tomography is one of the foremost traditional methodologies for characterizing unknown quantum preparations, measurements, or processes \cite{Tomo1,Tomo2,Tomo3}. However, from an operational perspective, wherein a set of unknown quantum devices are intended to perform information processing or computational tasks, quantum tomography is inadequate. In order to carry out tomography of an unknown quantum device, one first needs to know the dimension as well as the relevant degrees of freedom of the physical system that comprises the device, and accordingly, some other fully characterized quantum devices are essential. For instance, for the tomography of an unknown two-outcome  qubit measurement, at least three completely known qubit preparations are required. %Furthermore, the difficulties of quantum tomography increase much with the dimension of the quantum system and become excessively costly. In general, reconstruction of a $D$-dimensional Hermitian density matrix with unit trace requires  $\mathcal{O}(D^2)$ different known measurements \cite{Gianani20}. 
Therefore,  tomography of quantum state preparation, process or measurement is
a rather resource consuming method. A similar but less resource consuming method is randomized benchmarking which aims to characterize gate errors in an efficient and robust way in terms of the  average overlap  between the physical quantum states, measurements or processes and their ideal counterparts \cite{Eisert_review}. 

Motivated by certain key features of quantum information theory, other ingenious certification methods have  been introduced in recent years. These certifications rely upon various non-classical correlations observed only from the statistics that the devices generate. Moreover, these methods do not require full characterization of any of the devices and, hence, are categorized as device-independent certification protocols. In a fully device-independent scenario, all involved devices are considered as black boxes, thus requiring minimal assumptions on the underlying states and measurements. On the other hand, in a semi device-independent scenario, some  assumptions on the devices are required.  
%Upper bounds on the dimensionality of the Hilbert space (of preparation and measurements) or compatibility relations among the measurements are some of these additional assumptions. 
Apart from their fundamental interests, these certifications have been shown to be immensely useful in many information processing tasks, like quantum key distribution \cite{Acin'07}, secure randomness expansion \cite{Pironio'10}, quantum computation \cite{Vazirani}, and so on. 

The most complete form of device-independent certification, namely, self-testing, employs entanglement and other non-local correlations. With the requirement of space-like separated systems, self-testing provides the optimal possible characterization of entangled systems and quantum measurements without assuming any internal functioning of the devices. Historically, it was first designed in order to certify certain maximally entangled two-qubit states and non-commuting qubit measurements employing the maximum quantum violation of the Bell-CHSH (Bell-Clauser-Horne-Shimony-Holt) inequality \cite{my'98,my'04}. Since then, several other self-testing protocols have been proposed \cite{McKague'12,Yang'13,Bamps'15,Col,Wang'16,supic'181,supic'18}. 

Semi device-independent self-testing protocols have also been investigated \cite{Supic'16,Goswami'18,peng'20,Shrotriya'20,Saha'21,sarkar21} employing Einstein-Podolsky-Rosen steering. Moreover, semi device-independent certification for prepare and measure scenarios have  been  studied \cite{tavakoli'18} with the assumptions on the dimension of the underlying Hilbert space.  
%Though such self-testing protocols require weaker assumptions and  minimal trusts on the devices, these protocols  provide lesser information about the devices to be characterized.
Another class of certification techniques introduced recently, relies on some features of the measurement devices. Out of these, the ones exploiting quantum contextual correlations presume repeatable measurements with certain compatibility relations \cite{kishor'19}, or without any compatibility relations \cite{saha'20}. 

Quantum measurements are one of the most important and key resource in quantum technologies and play a crucial role to reveal the counter-intuitive quantum advantages in  non-classical phenomena. There are several protocols proposed till date to certify various quantum measurements, but most of them either require entanglement \cite{bancal'18,renou'18,kani'17,Mck,Bowles'18}, a costly resource, or need certain assumptions or trust on the measurement devices to be certified \cite{tavakoli'18,Maity'21}. Certification of $d$-outcome measurements (where $d$ is arbitrary) has received attention in a few works \cite{Yang'13,Col} involving scenarios that require a large number of measurements by each of the observers sharing the entangled state.  Recently, certification of $d$-outcome measurements has been proposed based on the Salavrakos-Augusiak-Tura-Wittek-A\'{c}in-Pironio (SATWAP) Bell inequalities \cite{SATWAP}, which involve two measurements on both sides of the shared entangled state \cite{saha}.  However, the above mentioned Bell-nonlocality based protocols require both entanglement as well as space-like separated subsystems in order to ensure loophole-free Bell violation. Therefore, a more efficient certification protocol involving less resources and minimal assumptions for $d$-outcome measurements is in order.

 With the above motivation, here we aim to present a  protocol to certify some specific $d$-outcome quantum measurements (with $d$ being arbitrary) employing the non-classicality of temporal correlations.  Our proposed protocol uniquely (up to some isometry) identifies  which set of measurements is being implemented by an unknown device using measurement statistics and some  partial (not tomographically complete) information. We consider a scenario involving one preparation device and one measurement device. The preparation device produces a maximally mixed state of  an unknown dimension on which the measurement device performs measurements twice in sequence.  In this scenario, we propose a set of temporal inequalities (satisfied by classical devices) pertaining to different values of $d$, containing time separated correlation functions corresponding to successive measurement outcomes. Using quantum violations of these inequalities, we certify a particular set of $d$-outcome quantum measurements without requiring entanglement or any prior knowledge about the dimension of the system. Our scheme relies on certain minimal assumptions  that can be met in practice. We further show that our protocol is robust against non-ideal realizations. Our certification protocol moreover enables us to formulate a scheme for genuine randomness certification.

The rest of the paper is organised as follows. In the next Section, we first present the scenario along with the required assumptions.  We next propose a criterion  certifying that   the measurement effects are projectors. In Section \ref{s4}, we provide our desired set of temporal inequalities along with their sum-of-squares decompositions  under the assumptions considered here. In Section \ref{s5}, we formulate our scheme for certifying $d$-outcome quantum measurements and its robustness analyses. Next, we demonstrate in Section \ref{s6} the framework for secure randomness certification based on our proposed certification protocol. Finally, Section \ref{s7} is reserved for concluding discussions along with some future perspectives. 

\section{Scenario} \label{s2}
Consider the scenario where at first a preparation device $\mathcal{P}$ prepares a state $\rho^{(\mathcal{P})}$ $\in$ $\mathbb{B}(\mathbb{C}^D)$ of an arbitrary dimension $D$.  This prepared state is then subjected to a measurement device $\mathcal{M}$ that performs measurement of the observable $A_i$ upon receiving an input $i$ with $i \in \{ 1, 2, 3, 4\}$. The post-measurement state is again subjected to the same measurement device $\mathcal{M}$ that receives another input $j$ with $j \in \{ 1, 2, 3, 4\}$ and performs measurement of the  observable $A_j$. In each experimental run, $\mathcal{M}$ receives the ordered pair of inputs $(i,j)$ randomly. The outcomes of the measurement of $A_i$ are denoted by $a_i$, where $a_i \in \{0,1, \cdots, d-1\}$. The outcome statistics thus produced are the joint probabilities $p(a_i, a_j |A_i A_j)$ with $i, j, \in \{1,2,3,4\}$ and $a_i, a_j \in \{0,1, \cdots, d-1\}$. Here, $p(a_i, a_j |A_i A_j)$ denotes the joint probability of getting the outcome $a_i$ when the measurement of $A_i$ is performed on the initially prepared state $\rho^{(\mathcal{P})}$, and the 
 outcome of $a_j$ when the measurement of $A_j$ is performed on the post measurement state of $A_i$. This scenario is depicted in Fig. \ref{fig1}.
 
 \begin{figure}[t!]
    \centering
    \includegraphics[scale=0.5]{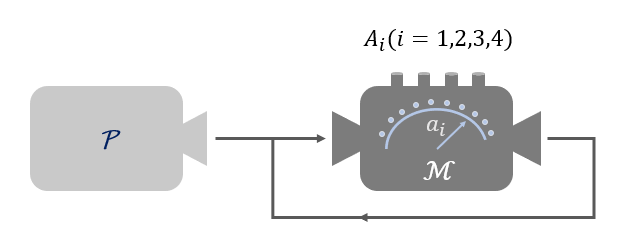}
    \caption{The scenario involves a preparation device $\mathcal{P}$ and a measurement device $\mathcal{M}$ with settings $A_i$ (where $i \in \{1, 2, 3, 4\}$)  which returns outcome $a_i \in \{0, 1, \cdots, d-1\}$. The state  prepared by $\mathcal{P}$ is subjected to $\mathcal{M}$ twice in sequence.}
    \label{fig1}
\end{figure} 

\subsection{Assumptions}

We make the following two assumptions:
\begin{assumption}
The preparation device $\mathcal{P}$ prepares the maximally mixed state $\rho^{(\mathcal{P})} = \mathbbm{1}/D$,  where the dimension $D$ is not required to be known for the realization of the protocol.
\label{ass1}
\end{assumption} 

\begin{assumption}
The measurement device $\mathcal{M}$ always returns the actual post-measurement state, and does not have any memory. 
\label{ass2}
\end{assumption} 

%Assumption \ref{ass1} is not really a speculation, but rather a maximally mixed state can be prepared and verified in a real experiment. 
Note that one does not need to know the dimension of the state prepared by $\mathcal{P}$ in order to realize the certification protocol. However, one must trust that $\mathcal{P}$ always prepares maximally mixed state. In other words, an unknown supplier provides a preparation device $\mathcal{P}$ producing an input state  $\rho^{(\mathcal{P})} \in \mathbb{B}(\mathbb{C}^D)$  and a measurement device $\mathcal{M}$ performing measurements of four possible observables acting on the same Hilbert space $\mathbb{C}^D$ (where $D\geq 2$ can have any integer value). We don't know the dimension $D$ or which particular measurements are performed by $\mathcal{M}$, but we trust that the supplier has devised $\mathcal{P}$ in such a way that $\rho^{(\mathcal{P})}$ is a maximally mixed state.
 As we  show later,  such an assumption  or trust  on the preparation device is necessary, else the measurements cannot be certified uniquely using our proposed temporal inequalities. Since, our motto is not to certify the initial state prepared by the preparation device, we can assume the preparation device to be trusted. 

A maximally mixed state can be prepared in the laboratory by subjecting an arbitrary state to a completely depolarizing channel, whose experimental realizations and identification are well-studied \cite{dep1,dep2,dep3,dep4}.  Alternatively, one can prepare a $D$-dimensional maximally mixed state in optical set-up by subjecting a single photon through a multi-branch Mach-Zehnder interferometer \cite{mmzi1,mmzi2} having $D$ number of arms. In each experimental run, one can ensure that the photon passes through a particular arm with unit probability by using specific alignments of a set of mirrors. Hence, this will allow to create mutually orthonormal states in the path degrees of freedom by sending the photons through different arms in different runs. Finally, taking equal mixture of these mutually orthonormal states by using a random number generator to fix the arm through which  the photon will pass in each run, higher dimensional maximally mixed state can be generated.

On the other hand, Assumption \ref{ass2} has two implications. First, no quantum channel is applied on the post-measurement state before the later measurement $A_j$, ensuring that the second measurement acts on the actual post-measurement state. Second,
the specifics of the later measurement depend only on the second input $j$, and independent of the  first input $i$ as well as the outcome of the first measurement $a_i$. 

Although this scenario for certification of measurements requires some assumption on the preparation state and measurement device, it is efficient in the sense that entanglement or other spatial correlations are not necessary for this scheme.

Next, let us derive the expressions of the measurement statistics produced in the aforementioned scenario under Assumptions \ref{ass1}-\ref{ass2}. 

In general, the measurement of $A_i$ (where $i \in \{1, 2, 3, 4\}$)  is represented by the POVM (Positive Operator Valued Measure) as: $A_i \equiv \{M_i^0,  \cdots, M_i^{d-1}  \}$ with $M^{a_i}_i \geq 0$ for all $a_i \in \{0, \cdots, d-1\}$ and  $\sum_{a_i=0}^{d-1} M_i^{a_i} = \mathbbm{1}$. Here, each $M^{a_i}_i$ is called a measurement effect corresponding to outcome $a_i$. The general form of the respective Kraus operators $\{K_i^{a_i}\}$ of the POVM $A_i$ is given by,
\be \label{krausop}
K_i^{a_i}= U_i^{a_i} \sqrt{M_i^{a_1}},
\ee 
where $i \in \{1,2,3,4\},$ $a_i \in \{0, \cdots, d-1\}$, and $U^{a_i}_i$ are some unitary operators. 

Using these notations, the unnormalized post-measurement state $\rho_{a_i}^{i}$, when the outcome $a_i$ is obtained after performing the measurement of $A_i$ on $\rho^{(\mathcal{P})}$, is given by,
\begin{align}
    \rho^{a_i}_{i} = \left(U_i^{a_i} \sqrt{M_i^{a_i}} \right) \rho^{(\mathcal{P})} \left(U_i^{a_i} \sqrt{M_i^{a_i}} \right)^{\dagger}
\end{align}
In the most distrustful scenario, the joint probability distribution, when the measurement of $A_i$ followed by the measurement of $A_j$ is performed, is given by,
\be
p(a_i,a_j|A_i,A_j) = \text{Tr}\Big[ M^{a_j}_{j,i,a_i} \cdot \Lambda_{i,a_i} \left(\rho^{a_i}_{i}\right) \Big] ,
\ee 
where the device can apply some quantum channel $\Lambda_{i,a_i}$ on the post-measurement  quantum state 
and moreover, the specifics of the later measurement, and hence, the measurement effects $\{M^{a_j}_{j,i,a_i}\}$ (where $\sum_{a_j=0}^{d-1} M^{a_j}_{j,i,a_i} = \mathbbm{1}$ for all $a_i,i,j$  and $M^{a_j}_{j,i,a_i} \geq 0$ for all $i,j,a_i,a_j$) may, in general, depend on $i$ and $a_i$. In such case, for any fixed $j$, the associated measurement effects of the later measurement might be different for two different choices of $i$ and/or $a_i$.

However, using Assumption \ref{ass2}, we can consider that the  later measurement measurement effects $\{M^{a_j}_j\}$ are independent of $i$ and/or $a_i$, and there is no quantum channel $\Lambda_{i,a_i}$. Further, using Assumption \ref{ass1}, we obtain a simplified form as
\be
p(a_i,a_j|A_i,A_j) = \frac{\text{Tr} \Big[ M_j^{a_j} U_i^{a_i} M_i^{a_i} \left(U_i^{a_i}\right)^{\dagger}\Big]}{D} .
\label{jointsimple}
\ee 

Due to the fact that the later measurement of $A_j$ cannot influence the outcome statistics of the first measurement $A_i$, one can obtain outcome statistics of the first measurement by taking the appropriate marginals as
\begin{align}
p(a_i | A_i) &= \sum_{a_j = 0}^{d-1} p(a_i, a_j |A_i, A_j) \nonumber \\
&\hspace{0.3cm} \forall a_i \in \{0, \cdots d-1\}, i, j \in \{1, 2, 3, 4\}.
\label{conjtprob}
\end{align}

%\newpage

\subsection{Projectivity of the measurement effects} \label{subIIB}

We would now like to present a lemma that introduces an operational criteria certifying the measurement effects to be projectors.
 
\begin{lemma}
Let the measurement of $A_i$ (where $i \in \{1, 2, 3, 4\}$) satisfies the following condition under Assumptions \ref{ass1}-\ref{ass2}, 
\begin{equation}
p(a_i, a_i | A_i, A_i) = p(a_i | A_i) \, \, \, \, \forall \, \, a_i \in \{0, \cdots, d-1\}.
\label{lemma1e}
\end{equation}
Then all the measurement effects of $A_i$ are mutually orthogonal projectors, that is, $\forall a_i, \widetilde{a_i} \in \{0, \cdots, d-1\}$
\be \label{MM}
M_i^{a_i} M_i^{\widetilde{a_i}} = \delta_{a_i,\widetilde{a_i}} M_i^{a_i},
\ee 
and moreover, each measurement effect is invariant under the respective unitary associated with Kraus operator \eqref{krausop}, that is, $\forall a_i \in \{0, \cdots, d-1\}$ 
\be \label{UMU}
\left(U_i^{a_i}\right) M_i^{a_i} \left(U_i^{a_i}\right)^\dagger = M_i^{a_i}
\ee 
for any choice of $U_i^{a_i}$.
\label{lemma1}
\end{lemma}
A thorough proof of the above Lemma can be found in the Appendix \ref{app1}. For the sake of completeness, here we would like to sketch the outline of the proof.

%In general, the measurement of $A_i$ can be taken as a POVM (Positive Operator Valued Measurement) with $A_i \equiv \{M_i^0, M_i^1, \cdots, M_i^{d-1}\}$. The general condition for POVM implies that $M_i^{a_i} \geq 0$ for all $a_i \in \{0, 1, \cdots, d-1\}$ along with $\sum\limits_{a_i=0}^{d-1} M_i^{a_i} = \mathbbm{1}$. 

Without loss of generality, for the POVM $A_i \equiv \{M_i^0,  \cdots, M_i^{d-1}  \}$, one can take $M_i^{a_i} = \sum_{u=0}^{m-1} \lambda_{u} |\psi_u \rangle \langle \psi_u |$, where $1 \leq m \leq D$, $0 < \lambda_{u} \leq 1$ for all $u \in \{0, \cdots, m-1\}$ and $\{ |\psi_0 \rangle,  \cdots, |\psi_{D-1} \rangle \}$ forms an orthonormal basis in $\mathbb{C}^D$. However, if the condition $p(a_i, a_i | A_i, A_i) = p(a_i | A_i)$ is achieved under Assumptions \ref{ass1}-\ref{ass2}, then a detailed calculation implies that  $\lambda_{u} = 1$ for all $u \in \{0,  \cdots, m-1\}$  (see Appendix \ref{app1}). Therefore, $M_i^{a_i} =  \sum_{u=0}^{m-1} |\psi_u \rangle \langle \psi_u |$ and hence, $M_i^{a_i}$ must be a projector. Now, if the above condition holds for all $a_i \in \{0, \cdots, d-1\}$, then Eqs.\eqref{MM}-\eqref{UMU} must hold true. %the above argument implies that the measurement of $A_i$ has to be projective. 

Next, we will present a set of temporal inequalities that will be used as a tool for certifying $A_1$, $A_2$, $A_3$, $A_4$.

\section{Temporal Inequality with optimal quantum violation } \label{s4}
Considering the scenario introduced in Section \ref{s2}, we would like to design a set of temporal inequalities which can be used as a witness to certify $A_1$, $A_2$, $A_3$, $A_4$ uniquely (up to some unitary). Nevertheless, it is important to state here that this inequality can be used   in the context of any preparation $\rho^{(\mathcal{P})}$ in the scenario mentioned in section \ref{s2},  without imposing Assumption \ref{ass2}.

Consider now the two-dimensional Fourier transform of the conditional probabilities $p(a_i,a_j|A_i, A_j)$ \cite{gc1,gc2}:
\begin{equation}
\langle A_i^{(k)} \, A_j^{(l)}\rangle = \sum_{a_i,a_j=0}^{d-1} \omega^{a_i k  + a_j l} p(a_i,a_j|A_i, A_j),
\label{exp1}
\end{equation}
where $\omega$ is the $d$-th root of unity i.e., $\omega = \exp (2 \pi \mathbbm{i}/d )$; $k, l \in \{0,  \cdots, d-1\}$; $i, j \in \{1,2,3,4\}$. 
%{\color{blue} In general, $A_x$ (with $x \in \{1,2,3,4\}$) is a  quantum measurement with outcome $a_x \in \{0, \cdots, d-1\}$ represented by a Positive Operator-Valued Measure (POVM): $A_x \equiv \{M_x^{0}, \cdots, M_x^{d-1} \}$ with each $M_x^{a_x}$ being an Hermitian operator such that $M_x^{a_x} \geq 0$ for all $a_x$ and   $\sum_{a_x=0}^{d-1} M_x^{a_x} = \mathbbm{1}$. 
Here, $\{A_x^{(z)}\}$ for each $x \in \{1,2,3,4\}$  are the Fourier transformed operators  defined as \cite{Saha'21}
\begin{equation}
	A_x^{(z)} = \sum_{a_x=0}^{d-1} \omega^{a_x z} M_x^{a_x} \hspace{0.3cm} \text{with} \, \, \, z=0, \cdots, d-1,
	\label{obsnew}
\end{equation}
where $ A_x \equiv \{M_x^{a_x} | M_x^{a_x} \geq 0 \, \, \forall \, a_x, \, \sum_{a_x} M_x^{a_x} = \mathbbm{1}\}$ as mentioned earlier. Each $A_x^{(z)}$ can be termed as a generalized observable. %In the case of POVM, the operators $A_i^{(k)}$ are not unitary and cannot be interpreted as the $k$-th power of $A_i$. 

It can be checked that for all $z \in \{0,  \cdots, d-1\}$ and $x\in \{1,2,3,4\}$
\begin{align}
	&A_x^{(z)^\dagger} = A_x^{(d-z)} = A_x^{(-z)}, \nonumber \\
	& A_x^{(z)^\dagger} A_x^{(z)} \leq \mathbbm{1}, \nonumber \\
	&A_x^{(0)} = \mathbbm{1}.
	\label{conditionforpovm}
\end{align} 
Importantly,  as a special case, when $M_x^{a_x} M_x^{\widetilde{a_x}} = \delta_{a_x, \widetilde{a_x}} M_x^{a_x}$ for all $a_x, \widetilde{a_x} \in \{0, \cdots, d-1\}$, i.e., all the POVM effects $\{M_x^{a_x}\}$ are mutually orthogonal projectors for each $x$, then we can define $A_x : = A^{(1)}_x = \sum_{a_x=0}^{d-1} \omega^{a_x} \,  \Pi_x^{a_x}$ for each $x$, where $\{\Pi_x^{a_x}\}$ are the respective projectors. In this case, $\{A_x^{(z)} \}$ for each $x \in \{1,2,3,4\}$  are collections of unitary operators with eigenvalues  $\omega^i$ $(i = 0, \cdots ,d-1)$ defined as 
\begin{equation}
A_x^{(z)} = \sum_{a_x=0}^{d-1} \omega^{a_x z} \Pi_x^{a_x} \hspace{0.3cm} \text{with} \, \, \, z=0, \cdots, d-1.
\label{obs}
\end{equation}	
It is not difficult to see from the above relation (\ref{obs}) that $A_x^{(z)}$ is simply the $z$-th power of $A_x$. Thus, in what
follows we use the notation $A_x^{(z)}$ or $A_x^z$  interchangeably when the effect operators associated with the measurement of $A_x$ are mutually orthogonal projectors.

\subsection{Temporal inequalities} 
Next, we propose the following set of temporal inequalities,
\begin{align}
\tau_d = &\sum_{k=1}^{d-1} \Bigg[ a_k \left\langle A_1^{(k)} \, A_3^{(d-k)} \right\rangle + a_k^{*} \omega^k \left\langle A_1^{(k)}  \, A_4^{(d-k)} \right\rangle \nonumber \\
& \, \, \, \, \, \, \, + a_k^{*} \left\langle A_2^{(k)} \, A_3^{(d-k)} \right\rangle  + a_k \left\langle A_2^{(k)} \, A_4^{(d-k)} \right\rangle \nonumber \\
& \, \, \, \, \, \, \, + 
a_k \left\langle  A_3^{(d-k)} \, A_1^{(k)} \right\rangle + a_k^{*} \omega^k \left\langle A_4^{(d-k)} \, A_1^{(k)}  \right\rangle \nonumber \\
& \, \, \, \, \, \, \,+ a_k^{*} \left\langle A_3^{(d-k)} \,  A_2^{(k)} \right\rangle + a_k \left\langle A_4^{(d-k)} \, A_2^{(k)}  \right\rangle \Bigg] \nonumber \\
& \leq C_d,
\label{temp}
\end{align} 
where $a_k = \dfrac{1- \mathbbm{i}}{2} \exp\Bigg(\dfrac{\pi \mathbbm{i} k}{2d} \Bigg)$ and 
\begin{align}
C_d =  3 \cot \Bigg(\dfrac{\pi}{4d} \Bigg) - \cot \Bigg( \dfrac{3\pi}{4d} \Bigg) - 4
\label{cbound}
\end{align}
 are the classical upper bounds of the temporal expressions $\tau_d$. These classical bounds are derived in the Appendix \ref{app2}.  In particular, the above temporal inequalities are not only satisfied by classical physics, but also satisfied by any theory consistent with the concept of ``macrorealism" \cite{lgi} which is the conjunction of the following two assumptions: \textit{(i) Realism:} At any instant, irrespective of any measurement, a system is definitely in any one of the available states such that all its observable properties have definite values. \textit{(ii) Noninvasive measurability:} It is possible, in principle, to determine which of the states the system is in, without affecting the state itself or the system's subsequent evolution. This notion of ``macrorealism" is one of the central concepts underpinning the classical world view.   The temporal inequalities (\ref{temp}) can be considered as  generalized versions of the Leggett-Garg inequality involving  measurements with arbitrary number of outcomes whereas the original Leggett-Garg inequality \cite{lgi} consists of binary outcome measurements. Temporal quantum correlations are inconsistent with macrorealism \cite{lgi} and, hence, it is possible to violate the above inequalities by quantum mechanical predictions. A point to be stressed here is that (\ref{temp}) represents different temporal inequalities for different $d$.

It should be mentioned that although the structure of the inequalities (\ref{temp}) is somewhat similar to the structure of the SATWAP Bell inequalities involving  measurements with arbitrary number of outcomes \cite{SATWAP}, these two sets of inequalities are conceptually different. The SATWAP inequalities consist of correlation functions between measurement outcomes of two spatially separated systems \cite{SATWAP}. On the other hand, the inequalities (\ref{temp}) proposed here contain correlation functions pertaining to a single system on which different measurements are performed sequentially.  Moreover, the above temporal inequalities (\ref{temp}) have twice more terms than the number of terms appearing in SATWAP inequalities.  Note further  that although the above inequalities (\ref{temp}) are expressed in the Fourier transformed space, these can also be expressed as linear functions of $p(a_i,a_j|A_i,A_j)$ with real coefficients and always take real values as shown in the Appendix \ref{app2}.

Up to now, the details of  the Fourier transformed expectation values and  the temporal inequalities (\ref{temp}) are discussed for an arbitrary preparation $\rho^{(\mathcal{P})}$. The following discussions will be applicable when  Assumptions \ref{ass1}-\ref{ass2} are considered.

Before proceeding, let us point out that  the optimal quantum violations of the temporal inequalities (\ref{temp}) under Assumptions \ref{ass1}-\ref{ass2} and condition (\ref{lemma1e}) are relevant for the certification scheme presented here.

Now, suppose that under Assumption \ref{ass1}-\ref{ass2}, the condition (\ref{lemma1e}) is satisfied  by each of the four  observables $A_1$, $A_2$, $A_3$ and $A_4$.  Here, $A_i =  A_i^{(1)}$ for all $i \in \{1,2,3,4\}$ with $A_i^{(1)}$ being defined in Eq.(\ref{obs}).  Then, it follows from  Lemma \ref{lemma1} that each of the measurement of $A_i$ can be represented as $A_i \equiv \{\Pi^{a_i}_i\}$, where $\Pi_i^{a_i} \Pi_i^{\widetilde{a_i}} = \delta_{a_i, \widetilde{a_i}} \Pi_i^{a_i}$ for all $a_i, \widetilde{a_i} \in \{0, \cdots, d-1\}$. Hence, the operator $A_i^{(k)}$  is  unitary with $A_i^{(k)} = A_i^k$ for all $i \in \{1,2,3,4\}$ and for all $k \in \{0, \cdots, d-1\}$.
Moreover, due to \eqref{UMU}, the joint probability expressed in Eq. \eqref{jointsimple} further reduces to,
\begin{align}
p(a_i,a_j|A_i,A_j) =&\text{Tr} \Big[ \Pi_j^{a_j} \,\Pi_i^{a_i} \, \rho^{(\mathcal{P})} \Big] \nonumber \\
&
\, \, \, \forall \, i, j \in \{1, 2, 3, 4\}, \nonumber \\
& \, \, \,  \text{and} \, \, \,  \forall \, a_i, a_j \in \{0,  \cdots, d-1\} ,
\label{pro1newnew}
\end{align} 
in which $\rho^{(\mathcal{P})} = \mathbbm{1}/D$. 
%{\color{red} Due to Assumption \ref{ass1}, that is, $\rho^{(\mathcal{P})} = \mathbbm{1}/D$, and Assumption \ref{ass2}, we can write the following from Eq.(\ref{jointsimple}),\begin{align}p(a_i,a_j|A_i,A_j) =&\frac{\text{Tr} \Big[ \Pi_j^{a_j} U_i^{a_i} \Pi_i^{a_i} U_i^{a_i^{\dagger}}\Big]}{D} 
%= \text{Tr} \Big[ \Pi_j^{a_j} \,\Pi_i^{a_i} \, \rho^{(\mathcal{P})} \Big]\, \, \, \forall \, i, j \in \{1, 2, 3, 4\}, \nonumber \\& \, \, \, \, \, \, \, \, \text{and} \, \, \,  \forall \, a_i, a_j \in \{0,  \cdots, d-1\}.\label{pro1new}\end{align}Now, it can be shown that when the measurement $A_i$ satisfies the condition (\ref{lemma1e}) under Assumption \ref{ass1}, then $U_i^{a_i} \Pi_i^{a_i} U_i^{a_i^{\dagger}} = \Pi_i^{a_i}$ for any choice of $U_i^{a_i}$. Hence, the unnormalized post-measurement state, when the outcome $a_i$ is obtained by performing the measurement of $A_i$ on $\rho^{(\mathcal{P})} = \mathbbm{1}/D$, is given by $\Pi_i^{a_i}/D$. The derivation of this form of the post-measurement state is presented in detail in the Appendix \ref{app1n}. Using this result, we get from Eq.(\ref{pro1new}) the following,
%\begin{align}p(a_i,a_j|A_i,A_j) =&\text{Tr} \Big[ \Pi_j^{a_j} \,\Pi_i^{a_i} \, \rho^{(\mathcal{P})} \Big] \, \, \, \forall \, i, j \in \{1, 2, 3, 4\}, \nonumber \\& \, \, \, \, \, \, \, \, \text{and} \, \, \,  \forall \, a_i, a_j \in \{0,  \cdots, d-1\}.\label{pro1newnew}\end{align}}

Consequently, using Eqs.(\ref{exp1}) and (\ref{obs}), we can write for all $k, l =1, \cdots, d-1$ and for all $i, j \in \{1, 2, 3, 4\}$,
\begin{align}
 \langle A_i^{(k)} \, A_j^{(l)}\rangle &= \text{Tr}\Big[A_i^{k} \, A_j^{l} \, \rho^{(\mathcal{P})} \Big] \nonumber \\ 
 &=\text{Tr}\Big[A_i^{(k)} \, A_j^{(l)} \, \rho^{(\mathcal{P})} \Big]. \nonumber
\end{align} 
Hence,  under Assumptions \ref{ass1}-\ref{ass2} and condition \eqref{lemma1e}, the left hand side of the temporal inequality (\ref{temp}) can be expressed as
\begin{align}
\tau_d  \left( \rho^{(\mathcal{P})} = \mathbbm{1} / D\right) = \text{Tr} \big[ \rho^{(\mathcal{P})} \, \hat{\beta}_{\tau_d} \big] ,
\label{temp2}
\end{align}
where the operator $\hat{\beta}_{\tau_d}$ is given by,
\begin{align}
\hat{\beta}_{\tau_d} =  \sum_{k=1}^{d-1} & \Big[  a_k  A_1^{(k)} \, A_3^{(d-k)}  + a_k^{*} \omega^k  A_1^{(k)}  \, A_4^{(d-k)}  \nonumber \\
&+ a_k^{*}  A_2^{(k)} \, A_3^{(d-k)}   + a_k  A_2^{(k)} \, A_4^{(d-k)}  \nonumber \\
&+ a_k   A_3^{(d-k)} \, A_1^{(k)}  + a_k^{*} \omega^k A_4^{(d-k)} \, A_1^{(k)}   \nonumber \\
& + a_k^{*}  A_3^{(d-k)} \,  A_2^{(k)}  + a_k  A_{(4)}^{(d-k)} \, A_2^{(k)}   \Big] .
\label{temp3}
\end{align} 
It should be noted here that (\ref{temp2}) is valid under the Assumptions \ref{ass1}-\ref{ass2} and when each of the four observables $A_i$ with $i \in \{1,2,3,4\}$ satisfies the condition (\ref{lemma1e}). In other words, the condition (\ref{temp2}) may not be true in general.

\subsection{Sum-of-squares decomposition with maximally mixed state}\label{sossection}
Now, let us derive the sum-of-squares decompositions of the temporal inequalities \eqref{temp}  under Assumptions \ref{ass1}-\ref{ass2}, when each of the four observables $A_i$ with $i \in \{1,2,3,4\}$ satisfies condition (\ref{lemma1e}). For this purpose,  let us first define
\begin{align}\label{defBi}
&B_1^{(k)} = a_k A_3^{(-k)} + a_k^{*} \omega^k A_4^{(-k)}, \nonumber \\
&B_2^{(k)} = a_k^{*} A_3^{(-k)} + a_k  A_4^{(-k)}.
\end{align} 
Note here that $a_{d-k} = a_k^{*}$, and therefore $B_x^{(d-k)} = \Big[B_x^{(k)} \Big]^{\dagger}$ for any $k=1, \cdots, d-1$ and $x=1,2$. %{\color{blue} Also note that $B_x^{(k)}$ (with $x \in \{1,2\}$) is not the $k$-th power of $B_x$}.

Since, $(A_i^{k})^{\dagger} = A_i^{d-k}$ and $A_i^{k}$ is unitary for all $i \in \{1, 2, 3, 4\}$ and for all $k \in \{1, \cdots, d-1\}$, we have $A_i^{-k} = A_i^{d-k}$ for all $i \in \{1, 2, 3, 4\}$ and for all $k \in \{1, \cdots, d-1\}$. Since $A_x^{(k)} = A_x^k$ for all $x \in \{1,2\}$ and for all $k \in \{1, \cdots, d-1\}$, it follows that 
\begin{align}
\hat{\beta}_{\tau_d} = \sum_{k=1}^{d-1} \Big[&  A_1^{(k)} \, B_1^{(k)}  +  A_2^{(k)} \, B_2^{(k)}  \nonumber \\
&+  B_1^{(k)} \, A_1^{(k)}   +  B_2^{(k)} \, A_2^{(k)}  \Big]. \nonumber
\end{align} 
Let us also define,
\begin{align}
P_x^{(k)} = \mathbbm{1} - A_x^{(k)} \, B_x^{(k)}& \hspace{0.3cm} \forall x \in \{1, 2\}, \nonumber \\
 &\forall k \in \{1, \cdots, d-1\}. 
\label{pxk}
\end{align} 
Using these relations, we have
\begin{align}
&\sum_{k=1}^{d-1} \sum_{x=1}^{2} \Big[ \Big(P_x^{(k)}\Big)^{\dagger} \, \Big(P_x^{(k)}\Big) \Big] \nonumber \\
&= \sum_{k=1}^{d-1} \Big[ 2 \, \mathbbm{1} - B_1^{(d-k)} A_1^{d-k} - A_1^{k} B_1^{(k)} \nonumber \\
& \hspace{0.5cm}+ \Big(B_1^{(k)} \Big)^{\dagger} \Big(B_1^{(k)} \Big) - B_2^{(d-k)} A_2^{d-k} \nonumber \\
& \hspace{0.5cm}  - A_2^{k} B_2^{(k)} + \Big(B_2^{(k)} \Big)^{\dagger} \Big(B_2^{(k)} \Big) \Big] .
\label{sos1}
\end{align}
One can verify that
\begin{equation}
\sum_{k=1}^{d-1} B_x^{(d-k)} A_x^{d-k} = \sum_{k=1}^{d-1} B_x^{(k)} A_x^{k} \hspace{0.5cm} \forall x = 1, 2.
\label{sos2}
\end{equation}
\iffalse
Also, for all $k \in \{1, \cdots, d-1\}$, one can easily check the following,
\begin{align}
&\Big(B_1^{(k)} \Big)^{\dagger} \Big(B_1^{(k)} \Big) + \Big(B_2^{(k)} \Big)^{\dagger} \Big(B_2^{(k)} \Big) \nonumber \\
&= |a_k|^2 \Big(A_3^{d-k}\Big)^{\dagger} \Big(A_3^{d-k}\Big) +  |a_k|^2 |\omega^k|^2 \Big(A_4^{d-k}\Big)^{\dagger} \Big(A_4^{d-k}\Big) \nonumber \\
& \hspace{0.2cm} +  |a_k|^2 \Big(A_3^{d-k}\Big)^{\dagger} \Big(A_3^{d-k}\Big) +  |a_k|^2 \Big(A_4^{d-k}\Big)^{\dagger} \Big(A_4^{d-k}\Big) \nonumber \\
& \hspace{0.2cm}+ \Big[(a_k)^2 (\omega^k)^{*} + (a_k^{*})^2 \Big] \Big(A_4^{d-k}\Big)^{\dagger} \Big(A_3^{d-k}\Big)  \nonumber \\
& \hspace{0.2cm}+ \Big[(a_k^{*})^2 \omega^k + (a_k)^2 \Big] \Big(A_3^{d-k}\Big)^{\dagger} \Big(A_4^{d-k}\Big).
\label{sos3}
\end{align}
Now, we have $\Big(A_3^{d-k}\Big)^{\dagger} \Big(A_3^{d-k}\Big) = \Big(A_4^{d-k}\Big)^{\dagger} \Big(A_4^{d-k}\Big) = \mathbbm{1}$ and $|a_k|^2 = \dfrac{1}{2}$ for all $k \in \{1, \cdots, d-1\}$. One can also check that, for all $k \in \{1, \cdots, d-1\}$, we have
\begin{align}
(a_k)^2 (\omega^k)^{*} + (a_k^{*})^2 = (a_k^{*})^2 \omega^k + (a_k)^2 = 0. \nonumber
\end{align}
Hence, from Eq.(\ref{sos3}), we can write,
\fi
Now, incorporating the fact that $\Big(A_3^{d-k}\Big)^{\dagger} \Big(A_3^{d-k}\Big) = \Big(A_4^{d-k}\Big)^{\dagger} \Big(A_4^{d-k}\Big) = \mathbbm{1}$ and $(a_k)^2 (\omega^k)^{*} + (a_k^{*})^2 = (a_k^{*})^2 \omega^k + (a_k)^2 = 0$ for all $k \in \{1, \cdots, d-1\}$ one can evaluate that
\begin{align}
\Big(B_1^{(k)} \Big)^{\dagger} \Big(B_1^{(k)} \Big) + \Big(B_2^{(k)} \Big)^{\dagger} \Big(B_2^{(k)} \Big) = 2 \, \mathbbm{1} \hspace{0.2cm} \forall k.
\label{sos4}
\end{align}

Using Eqs.(\ref{sos1}) (\ref{sos2}), (\ref{sos4}), we have
\begin{align}
\sum_{k=1}^{d-1} \sum_{x=1}^{2} \Big[ \Big(P_x^{(k)}\Big)^{\dagger} \, \Big(P_x^{(k)}\Big) \Big] = 4 (d-1) \, \mathbbm{1} - \hat{\beta}_{\tau_d}.
\label{sos5}
\end{align}
The left hand side of Eq.(\ref{sos5}) is the  sum-of-squares decomposition and it is the sum of positive operators. Hence, we have
\begin{equation}
\text{Tr}\Big[ \rho^{(\mathcal{P})} \big(4 (d-1) \, \mathbbm{1} - \hat{\beta}_{\tau_d} \big) \Big] \geq 0. \nonumber
\end{equation}
The above inequality implies that
\begin{equation} \label{maxqvalue}
\text{Tr}\big[ \rho^{(\mathcal{P})} \, \hat{\beta}_{\tau_d} \big] \leq 4(d-1).
\end{equation}
Hence, the upper bounds of the quantum magnitudes of $\tau_d$ for all $d \geq 2$ are $4(d-1)$  under Assumptions \ref{ass1}-\ref{ass2} and condition (\ref{lemma1e}). It can be checked that $4(d-1) > C_d$ for all $d \geq 2$ \cite{SATWAP}, where $C_d$ given by Eq.(\ref{cbound}) are the classical upper bounds of the temporal expressions $\tau_d$.  Again, it should be noted here that $4(d-1)$ may not be the optimal upper bounds of quantum violations of the temporal inequalities (\ref{temp}) in general.

\subsection{Binary outcome ($d=2$) case} \label{sec:4C}

In the case of $d=2$, the temporal inequality (\ref{temp}) reduces to 
\begin{align}
\tau_2  &=   \frac{1}{\sqrt{2}} \Big[ \langle A_1  A_3 \rangle- \langle A_1   A_4 \rangle +  \langle A_2  A_3 \rangle +  \langle A_2  A_4 \rangle  \nonumber \\
& \, \, \, \, \, \, +  \langle  A_3  A_1 \rangle- \langle A_4  A_1  \rangle +  \langle A_3   A_2 \rangle +  \langle A_4  A_2  \rangle \Big] \nonumber \\
&\leq 2 \sqrt{2},
\label{tid2}
\end{align} 
where the measurement of $A_i$ (with $i \in \{1, 2, 3, 4\}$) has two possible outcomes denotes by $a_i \in \{0, 1\}$. This turns out to be the symmetrized version of the temporal inequality proposed in \cite{entintime}.  Interestingly, in this particular binary outcome scenario, the expression \eqref{temp2} for the quantum value of \eqref{tid2}  holds for any general input state, that is, for any $\rho^{(\mathcal{P})} \in \mathbb{B}(\mathbb{C}^D)$ (where $D$ is arbitrary),
\begin{align}
\tau_{2} \left( \rho^{(\mathcal{P})} \right) = \text{Tr} \big[ \rho^{(\mathcal{P})} \, \hat{\beta}_{\tau_2} \big]  \, \, \, \forall \, \, \rho^{(\mathcal{P})} \in \mathbb{B}(\mathbb{C}^D),
\label{temp22}
\end{align}
where the operator $\hat{\beta}_{\tau_2}$ is given by,
\begin{align}
   \hat{\beta}_{\tau_2}  =&  A_1 \, A_3 -  A_1  \, A_4 +  A_2 \, A_3  +  A_2 \, A_4 \nonumber \\
    & +  A_3 \, A_1 - A_4 \, A_1  +  A_3 \,  A_2  +  A_4 \, A_2  .
    \label{betatau2}
\end{align} 
To see this,  note that there is no restriction on the dimension $D$ of the measurements. Therefore, the Naimark's dilation theorem allows us to consider these measurements to be projective given by,
\begin{align}
    A_i = \Pi_i^{0} - \Pi_i^{1} \hspace{0.3cm} \forall i \in \{1, 2, 3, 4\},
    \nonumber
\end{align}
where $\{\Pi_i^{0} , \Pi_i^{1} \}$ are projectors acting on $\mathbb{C}^D$ with 
\begin{align}
    \Pi_i^{a_i} = \frac{\mathbbm{1} +(-1)^{a_i} A_i}{2} \, \, \forall \, a_i \in \{0,1\} .
    \label{md2e4}
\end{align} 
The joint probability for any state $\rho^{(\mathcal{P})}$ is given by, 
\begin{align}
    p(a_i,a_j|A_i, A_j) = \text{Tr}\Big[ \Pi_j^{a_j} \Pi_i^{a_i} \rho^{(\mathcal{P})} \Pi_i^{a_i} \Big] 
    \label{md2e3}
\end{align}
$\forall i, j \in \{1, 2, 3, 4\}$ and $\forall a_i, a_j \in \{0,1\}$. 
Using the expression of correlation function
\begin{align}
\langle A_i \, A_j \rangle = \sum_{a_i,a_j=0}^{1} (-1)^{a_i  + a_j} p(a_i,a_j|A_i, A_j),
\label{d2e1}
\end{align}
and Eqs. (\ref{md2e4})-(\ref{md2e3}), we get the following
\begin{align}
 \langle A_i \, A_j \rangle + \langle A_j \, A_j \rangle =&  \text{Tr}\Big[\left(A_i A_j + A_j A_i \right) \rho^{(\mathcal{P})} \Big] \nonumber \\ 
 &\, \, \, \, \, \, \, \, \, \, \forall \, \, \rho^{(\mathcal{P})} \in \mathbb{B}(\mathbb{C}^D).    
\end{align}
By considering pairs of terms in the temporal expression \eqref{tid2}, one can readily verify that the associated quantum operator is given by \eqref{betatau2}.
Subsequently, the sum-of-squares decomposition \eqref{sos5} for $d=2$ and the quantum upper bound $4$ of \eqref{tid2} hold true for any general preparation $\rho^{(\mathcal{P})}$.

\section{Certification of quantum measurements} \label{s5}

In this section, we would like to state our main results for certifying the $d$-outcome quantum measurements employing the set of temporal inequalities proposed by us in the previous section along with the sum-of-squares decompositions (\ref{sos5}). Before proceeding further, let us first verify that the measurements cannot be certified uniquely using the temporal inequalities (\ref{temp}) if we do not make any assumption on the state preparation. Here, uniqueness encompasses the unitary freedom of the measurements. 

\begin{lemma} 
There exist at least two sets of preparations and binary-outcome projective measurements  for which the magnitude of quantum violation of the temporal inequality (\ref{tid2})  is $4$, but the measurements in these two sets are not unitarily connected. In other words, uniqueness of the measurements cannot be shown employing binary outcome temporal inequality (\ref{tid2})  without any assumption on the preparation.
\label{lemma2}
\end{lemma}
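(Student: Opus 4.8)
The plan is to establish the non-uniqueness constructively, by exhibiting two explicit realizations that both attain the optimal value $\tau_2=4$ established in Sec.~\ref{sec:4C}, yet whose measurement operators cannot be related by any unitary. The guiding observation is that, through the sum-of-squares identity \eqref{sos5}, maximal violation only constrains the observables on the support of the preparation $\rho^{(\mathcal{P})}$; choosing a preparation that is not of full rank therefore leaves the measurements unconstrained on the complementary subspace, and this residual freedom can be used to alter their spectra while leaving the value of $\tau_2$ untouched.

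For the first set I would take a single qubit ($D=2$), the preparation $\rho_1=\I/2$, and the four projective observables
\begin{equation}
A_1=\sigma_z,\quad A_2=\sigma_x,\quad A_3=\frac{\sigma_z+\sigma_x}{\sqrt2},\quad A_4=\frac{\sigma_x-\sigma_z}{\sqrt2}.\nonumber
\end{equation}
These are rank-one projective observables, so \eqref{temp22} applies, and a direct evaluation of the anticommutators $\{A_1,A_3\}$, $\{A_1,A_4\}$, $\{A_2,A_3\}$, $\{A_2,A_4\}$ shows that $\hat\beta_{\tau_2}$ reduces to the optimal value times the identity on this qubit, whence $\tau_2(\rho_1)=4$.

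For the second set I would embed the same qubit structure into $\mathbb{C}^3$. Writing $\mathbb{C}^3=\mathrm{span}\{\ket{0},\ket{1}\}\oplus\mathrm{span}\{\ket{2}\}$, I define the block-diagonal observables $\tilde A_i=A_i\oplus(+1)$, acting as $A_i$ on the qubit block and as $+\I$ on $\ket{2}$, together with the pure preparation $\rho_2=\ket{\psi}\bra{\psi}$, $\ket{\psi}=(\ket{0}+\ket{1})/\sqrt2$. Each $\tilde A_i$ is still projective, but its two spectral projectors now have ranks $2$ and $1$. Since every product $\tilde A_i\tilde A_j=(A_iA_j)\oplus 1$ is block-diagonal and leaves the qubit block invariant, while $\ket{\psi}$ lies entirely in that block, \eqref{temp22} gives $\tau_2(\rho_2)=\bra{\psi}\hat\beta_{\tau_2}\ket{\psi}=4$ as well.

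It then remains to argue that the two measurement sets are not unitarily connected. A unitary $U$ realizing $U A_iU^{\dagger}=\tilde A_i$ for every $i$ would map each spectral projector of $A_i$ onto the corresponding projector of $\tilde A_i$; but unitary conjugation preserves both the dimension of the space and the rank of a projector, whereas the projectors of $A_i$ all have rank one while those of $\tilde A_i$ have ranks two and one. Hence no such $U$ exists, which proves the lemma; and an analogous pair realized at a common dimension (both in $\mathbb{C}^3$, with spectra $(2,1)$ versus $(1,2)$) shows the inequivalence persists even at fixed dimension. I expect the only delicate point to be the bookkeeping that certifies maximal violation in the second realization — one must verify that the preparation has no overlap with the appended subspace, so that the junk block never enters the correlators; the inequivalence conclusion itself is then immediate from the invariance of projector ranks under conjugation. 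This construction also makes transparent why Assumption~\ref{ass1} is needed: a full-rank (maximally mixed) preparation forces $\hat\beta_{\tau_2}=4\,\I$ on the entire space through \eqref{sos5}, eliminating precisely the off-support freedom exploited here.
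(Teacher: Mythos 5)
Your proof is correct, and it follows the same high-level strategy as the paper (exhibit two explicit maximally violating realizations and then name an invariant that rules out a connecting unitary), but the examples and the invariant are genuinely different. The paper works entirely in $\mathbb{C}^3$ with two pure preparations and four rank-one-based observables $A_i=2|u_i\rangle\langle u_i|-\mathbbm{1}$ versus $A_i=2|v_i\rangle\langle v_i|-\mathbbm{1}$ (the second set found numerically), and detects inequivalence through the unitary invariant $|\langle u_1|u_3\rangle|\neq|\langle v_1|v_3\rangle|$; you instead append a junk block outside the support of a non-full-rank preparation and detect inequivalence through the ranks of the spectral projectors (and, in your primary pair, the dimension itself). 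Your route buys a fully analytic verification and a transparent explanation of \emph{why} Assumption~\ref{ass1} is needed — the SOS identity \eqref{sos5} only pins down the observables on the support of $\rho^{(\mathcal{P})}$ — whereas the paper's pair is stronger in character: both of its realizations have the same dimension and the same rank profile, so the measurements differ by a genuine deformation of the angles between them rather than only by off-support junk, which rules out even equivalence notions that would quotient out rank or dimension. Two small cautions on your version: the $\mathbb{C}^2$-versus-$\mathbb{C}^3$ pair makes ``not unitarily connected'' hold for the trivial reason that no unitary exists between spaces of different dimension, so the fixed-dimension variant with rank profiles $(2,1)$ versus $(1,2)$ in $\mathbb{C}^3$ is the one you should foreground; and when you say $\hat\beta_{\tau_2}$ ``reduces to the optimal value times the identity,'' keep the $1/\sqrt{2}$ normalization of \eqref{tid2} explicit, since the anticommutator sum itself evaluates to $4\sqrt{2}\,\mathbbm{1}$ before normalization.
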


\begin{proof}

Let us first note that the optimal quantum value $4$ of \eqref{tid2} holds for arbitrary preparation $\rho^{(\mathcal{P})} \in \mathbb{C}^D$. Now, we show that there exist two different sets of preparations and projective measurements  such that the measurements in these two sets are not connected unitarily although  $\tau_2 = 4$ is achieved  by both of these two sets. Hence, for arbitrary preparations, the measurements  cannot be certified uniquely using the inequality (\ref{tid2}) without any assumption on the preparation device. The explicit examples of such two different sets of preparations and projective  measurements  are given in the Appendix \ref{app3}.
\end{proof}

The above lemma implies that some assumption on the preparation device is necessary in order to certify the measurements uniquely using the quantum violation of the temporal inequality (\ref{temp}). Therefore, we have considered Assumption \ref{ass1} which states that the preparation device produces a maximally mixed state of  an unknown dimension. 

Before proceeding, let us define the $d$-dimensional generalization of the $\sigma_z$-Pauli matrix in the
standard basis given by,
\begin{align}
    Z_d = \sum_{i=0}^{d-1} \omega^i |i \rangle \langle i |.
\end{align}
let us also introduce the following $d$-dimensional unitary observable,
\begin{align}
    T_d =& \sum_{i=0}^{d-1} \omega^{i+\frac{1}{2}} |i \rangle \langle i | \nonumber \\
    & \, \, \, \, - \frac{2}{d} \sum_{i,j=0}^{d-1} (-1)^{\delta_{i,0} + \delta_{j,0}} \omega^{\frac{i + j + 1}{2}} |i \rangle \langle j |,
    \label{exptd}
\end{align}
where $\delta_{i,j}$ is the Kronecker delta function. It can be checked that $Z_d$ and $T_d$ are unitary with eigenvalues $\omega^i$
($i = 0, \cdots, d - 1$)  \cite{saha}.

With these, we present a theorem  providing the certification of a specific set of quantum measurements. This theorem states that one can certify the four observables $A_1,A_2,A_3,A_4,$ uniquely (up to some unitary freedom) using the condition (\ref{lemma1e})  and  the temporal inequalities (\ref{temp}). Hence, one can indeed certify some particular quantum measurements having arbitrary number of outcomes  under certain assumptions employing temporal quantum correlations without using entanglement. %The proof of this theorem is inspired by the self-testing proof from SATWAP Bell inequality proposed in \cite{saha}.   

\begin{thm}\label{theo}
Suppose that  in the scenario considered by us with any fixed value  of $d$ under Assumptions \ref{ass1}-\ref{ass2},  the condition (\ref{lemma1e}) is satisfied by each of the four observables $A_1,A_2,A_3,A_4,$ and the magnitude of quantum violation of the temporal inequality (\ref{temp})  is $4(d-1)$. Then, for any $d$, we have  $\mathbb{C}^D = \mathbb{C}^d \otimes \mathcal{H}' $ with some
auxiliary Hilbert space $\mathcal{H}'$ of unknown but finite dimension. Further, there exists a  unitary transformation $U : \mathbb{C}^d \otimes \mathcal{H}'  \rightarrow \mathbb{C}^d \otimes \mathcal{H}'$, such that
\begin{align} 
U A_1 U^{\dagger} &= Z_d \otimes  \mathbbm{1}_{\mathcal{H}'},  \nonumber \\
U A_2 U^{\dagger} &= T_d \otimes  \mathbbm{1}_{\mathcal{H}'}, \nonumber \\
U A_3 U^{\dagger} &= \Big(a_1^{*} Z_d + 2 (a_1^*)^3 T_d \Big) \otimes  \mathbbm{1}_{\mathcal{H}'},  \nonumber \\
U A_4 U^{\dagger} &= \Big(a_1 Z_d - a_1^* T_d \Big) \otimes  \mathbbm{1}_{\mathcal{H}'},  
\label{certifiedmeasurements}
\end{align}
where $a_1 = \dfrac{1-\mathbbm{i}}{2} \omega^{\frac{1}{4}}$, $Z_d$ and $T_d$ are defined earlier and $\mathbbm{1}_{\mathcal{H}'}$ is the identity matrix acting on $\mathcal{H}'$.
\end{thm}

\begin{proof}
Under Assumptions \ref{ass1} and \ref{ass2}, we can take the measurement effects of each of the observables $A_i$ with $i \in \{1,2,3,4\}$ to be mutually orthogonal projectors as each of these   observables satisfies the condition (\ref{lemma1e}). This follows from Lemma \ref{lemma1}.  Hence, $A_x^{(k)}$  is  unitary operator with $A_x^{(k)} = A_x^k$ for all $x \in \{1,2,3,4\}$ and for all $k \in \{0, \cdots, d-1\}$. Also, we can consider that the condition (\ref{UMU}) holds true.

Next, since the magnitude of quantum violation of the temporal inequality (\ref{temp})  is $ 4(d-1)$  under Assumptions \ref{ass1}-\ref{ass2}, using Eq.(\ref{sos5}), we can write
\begin{align}
\text{Tr} \Bigg[ \rho^{(\mathcal{P})} \, \Bigg\{ \sum_{k=1}^{d-1} \sum_{x=1}^{2}  \Big(P_x^{(k)}\Big)^{\dagger} \, \Big(P_x^{(k)}\Big) \Bigg\} \Bigg] = 0. \nonumber
\end{align}
Since $\Big(P_x^{(k)}\Big)^{\dagger} \, \Big(P_x^{(k)}\Big) \geq 0$ for all $k \in \{1, \cdots, d-1\}$ and for all $x \in \{1,2\}$, the above equation implies that
\begin{align} 
\text{Tr} \Bigg[ \rho^{(\mathcal{P})} \,  \Big(P_x^{(k)}\Big)^{\dagger} \, \Big(P_x^{(k)}\Big)  &\Bigg] = 0 \nonumber \\
& \forall \, k \in \{1, \cdots, d-1\} \nonumber \\
&\text{and} \, \, \, \forall \, x \in \{1,2\}.
\label{p2} 
\end{align}
As mentioned earlier, $\rho^{(\mathcal{P})} = \sum_{u=0}^{D-1} \frac{1}{D} | \xi_u \rangle \langle \xi_u |$
with $\{|\xi_u \rangle \}$ being an arbitrary orthonormal basis in $\mathbb{C}^D$. Hence, from Eq.(\ref{p2}), we have
\begin{align}
\Big(P_x^{(k)}\Big) &| \xi_u \rangle = 0, \hspace{0.5cm}  \langle \xi_u |\Big(P_x^{(k)}\Big)^{\dagger} = 0 \nonumber \\
&\forall \, x \in \{1,2\}, \, \, \forall \, k \in \{1, \cdots, d-1\}, \nonumber \\
& \forall \, u \in \{0, \cdots, D-1\}.  \nonumber
\end{align}
We can, therefore, write that
\begin{align}
\Big(P_x^{(k)}\Big)& \rho^{(\mathcal{P})} = 0, \hspace{0.5cm} \rho^{(\mathcal{P})} \Big(P_x^{(k)}\Big)^{\dagger} = 0 \nonumber \\
& \forall \, x \in \{1,2\}, \, \, \forall \, k \in \{1, \cdots, d-1\}.
\label{p4}
\end{align}
%Now, using Eqs.(\ref{pxk}) and (\ref{p4}), we get
%\begin{align}
%A_x^k \, B_x^{(k)} \rho^{(\mathcal{P})} &= \rho^{(\mathcal{P})} \hspace{0.5cm} \Big(B_x^{(k)}\Big)^{\dagger} \, \Big(A_x^k\Big)^{\dagger} \rho^{(\mathcal{P})} = \rho^{(\mathcal{P})} \nonumber \\ &\forall \, x \in \{1,2\}, \, \, \forall \, k \in \{1, \cdots, d-1\}.  
%\label{p5}
%\end{align}
Taking $\rho^{(\mathcal{P})} = \mathbbm{1}/D$ and using Eqs.(\ref{pxk}) and (\ref{p4}), we get
\begin{align}
A_x^{k} \, B_x^{(k)} &= \mathbbm{1}, \hspace{0.5cm} \Big(B_x^{(k)}\Big)^{\dagger} \, \Big(A_x^{k}\Big)^{\dagger} = \mathbbm{1} \nonumber \\ &\forall \, x \in \{1,2\}, \, \, \forall \, k \in \{1, \cdots, d-1\}.  
\label{p5n}
\end{align}
%Now, multiplying the above two equations, we get
%\begin{align}
%\Big(B_x^{(k)}\Big)^{\dagger} \, \Big(A_x^k\Big)^{\dagger} \,& A_x^k \, B_x^{(k)} = \mathbbm{1} \nonumber \\ &\forall \, x \in \{1,2\}, \, \, \forall \, k \in \{1, \cdots, d-1\}. %% 
%\label{p5nn}    
%\end{align}
%Using the above equation, one gets
%\begin{align}
%\Big(B_x^{(k)}\Big)^{\dagger} \, \Big(A_x^k\Big)^{\dagger} \, A_x^k \, B_x^{(k)} \rho^{(\mathcal{P})} &= \Big(B_x^{(k)}\Big)^{\dagger} \, \Big(A_x^k\Big)^{\dagger} \, \rho^{(\mathcal{P})} \nonumber \\
%& = \rho^{(\mathcal{P})} \nonumber \\
%& \hspace{-1.5cm} \forall \, x \in \{1,2\}, \, \, \forall \, k \in \{1, \cdots, d-1\}.
%\label{p6}
%\end{align}
%Since $\Big(A_x^k\Big)^{\dagger} \, \Big(A_x^k\Big) =\mathbbm{1}$, we have
%\begin{align}
%&\Big(B_x^{(k)}\Big)^{\dagger} \, \Big(A_x^k\Big)^{\dagger} \, A_x^k \, B_x^{(k)} = \Big(B_x^{(k)}\Big)^{\dagger} \, \Big( B_x^{(k)} \Big) \nonumber \\
%&\hspace{1cm} \forall \, x \in \{1,2\}, \, \, \forall \, k \in \{1, \cdots, d-1\}.
%\label{p7}
%\end{align}
%Now, comparing Eqs.(\ref{p6}) and (\ref{p7}), we get
%\begin{align}
%&\Big(B_x^{(k)}\Big)^{\dagger} \, \Big(B_x^{(k)} \Big) \, \rho^{(\mathcal{P})}  = \rho^{(\mathcal{P})} \nonumber \\
%&\hspace{0.9cm} \forall \, x \in \{1,2\}, \, \, \forall \, k %\in \{1, \cdots, d-1\}.
%\label{p8}
%\end{align}
%Since $\rho^{(\mathcal{P})} = \dfrac{\mathbbm{1}}{D}$, Eq.(\ref{p8}) leads to the following,
Using the above two equations along with the condition $\Big(A_x^{k}\Big)^{\dagger} \, \Big(A_x^{k}\Big) =\mathbbm{1}$, one has
\begin{align}
&\Big(B_x^{(k)}\Big)^{\dagger} \, \Big(B_x^{(k)} \Big) = \Big(B_x^{(d-k)}\Big) \, \Big(B_x^{(k)} \Big) =  \mathbbm{1} \nonumber \\
&\hspace{0.9cm} \forall \, x \in \{1,2\}, \, \, \forall \, k \in \{1, \cdots, d-1\}.
\label{p9nnnn}
\end{align}
Now, taking $x=1$, the above condition leads to,
\begin{align}
A_3^{k} \, A_4^{-k} = \omega^{-k} A_4^{k} \, A_3^{-k} \hspace{0.5cm}  \forall \, k \in \{1, \cdots, d-1\}.
\label{p10}
\end{align}
Due to the fact that $A_i^{d}= \mathbbm{1}$ for all $i \in \{1,2,3,4\}$, the above relation (\ref{p10}) can be extended to any integer $k \in \mathbb{Z}$.

Next, Eq.(\ref{p5n}) also implies that 
%$A_x^k \, B_x^{(k)} = \mathbbm{1}$ for all $x \in \{1, 2\}$ and $k \in \{1, \cdots, d-1\}$ as $\Big(\rho^{(\mathcal{P})}\Big)^{-1}$ exists. That is,
$B_x^{(k)} = (A_x^k)^{\dagger} =  (A_x^{\dagger})^k$ for all $x \in \{1, 2\}$ and $k \in \{1, \cdots, d-1\}$. Moreover, with $k=1$, Eq.(\ref{p5n}) implies that  $B_x^{(1)} = A_x^{\dagger}$ for all $x \in \{1, 2\}$. Hence, we get
\begin{align}
B_x^{(k)} &=   (A_x^{\dagger})^{k} \nonumber \\
&= (B_x^{(1)})^{k} \hspace{0.15cm} \forall \, x \in \{1,2\}, \,  \forall \, k \in \{1, \cdots, d-1\}.
\label{p11}
\end{align}

Eqs.(\ref{p5n}), (\ref{p9nnnn}), (\ref{p10}) and (\ref{p11}) are sufficient to characterize $A_1$, $A_2$ $A_3$ and $A_4$. In fact, it can be shown that if the unitary observables $A_3$ and $A_4$ satisfy the conditions (\ref{p9nnnn}), (\ref{p10}) and (\ref{p11}), then we can draw the following two conclusions (the calculations are the same as presented in the  Supplementary Information  of \cite{saha}):

\begin{itemize}
    \item The dimension $D$ is a multiple of the number of outcomes $d$, meaning that
\begin{align}
\mathbb{C}^D = \mathbb{C}^d \otimes \mathcal{H}' ,\nonumber
\end{align}
where  $\mathcal{H}'$ is some
auxiliary Hilbert space of finite dimension $D/d$ which is unknown as  $D$ is unknown.

\item There exists a  unitary transformation $\widetilde{U}: \mathbb{C}^d \otimes \mathcal{H}'  \rightarrow \mathbb{C}^d \otimes \mathcal{H}'$, such that
\begin{align}
\widetilde{U} A_3 \widetilde{U}^{\dagger} &= Z_d \otimes  \mathbbm{1}_{\mathcal{H}'}, \label{p161} \\
\widetilde{U} A_4 \widetilde{U}^{\dagger} &= T_d \otimes  \mathbbm{1}_{\mathcal{H}'},
\label{p16}  
\end{align}
where $Z_d$ and $T_d$ are defined earlier.
\end{itemize}

Now, it can be proved that there exists a  unitary transformation $W =  : \mathbb{C}^d  \rightarrow \mathbb{C}^d$, such that
\begin{align}
W Z_d W^{\dagger} &= \Big(a_1^{*} Z_d + 2 (a_1^*)^3 T_d \Big),  \label{wun1} \\
W T_d W^{\dagger} &= \Big(a_1 Z_d - a_1^* T_d \Big), 
\label{wun2}
\end{align}
where $a_1 = \dfrac{1- \mathbbm{i}}{2} \omega^{\frac{1}{4}}$. The existence of such a unitary $W$ is proved in the  Supplementary Information  of \cite{saha}.

Therefore, using Eqs.(\ref{p161})-(\ref{wun2}), we can conclude that there exists a unitary transformation $U =  (W \otimes \mathbbm{1}_{\mathcal{H}'}) \, \widetilde{U}: \mathbb{C}^d \otimes \mathcal{H}'  \rightarrow \mathbb{C}^d \otimes \mathcal{H}'$, such that 
\begin{align}
U A_3 U^{\dagger} &= \Big(a_1^{*} Z_d + 2 (a_1^*)^3 T_d \Big) \otimes  \mathbbm{1}_{\mathcal{H}'}, \label{p182}  \\
U A_4 U^{\dagger} &= \Big(a_1 Z_d - a_1^* T_d \Big) \otimes  \mathbbm{1}_{\mathcal{H}'}.
\label{p18} 
\end{align}
Now, Eqs. (\ref{p11}), (\ref{p182}) and (\ref{p18}) imply that 
\begin{align}
U B^{(k)}_1 U^{\dagger} &= \big(Z^{\dagger}_d \big)^{k} \otimes  \mathbbm{1}_{\mathcal{H}'}, \label{p19}  \\
U B^{(k)}_2 U^{\dagger} &=\big(T^{\dagger}_d \big)^{k}  \otimes  \mathbbm{1}_{\mathcal{H}'}.
\label{p192} 
\end{align}
In particular, if $A_3$ and $A_4$ are transformed by applying the unitary $U$, then $B^{(1)}_1 = Z^{\dagger}_d$ and $B^{(1)}_2 = T^{\dagger}_d$. 

Next, Eq.(\ref{p5n}) implies that $A_x^k \, B_x^{(k)} = \mathbbm{1}$ for all $x \in \{1, 2\}$ and $k \in \{1, \cdots, d-1\}$. Hence,  with $k=1$, we get
\begin{align}
A_x = \big( B_x^{(1)} \big)^{\dagger} \hspace{0.9cm} \forall \, x \in \{1,2\}.
\label{p20}
\end{align}
Hence, Eqs.(\ref{p19})-(\ref{p20}) imply that there exists a  unitary transformation $U : \mathbb{C}^d \otimes \mathcal{H}'  \rightarrow \mathbb{C}^d \otimes \mathcal{H}'$, such that
\begin{align}
U A_1 U^{\dagger} &= Z_d \otimes  \mathbbm{1}_{\mathcal{H}'},  \label{p211}  \\
U A_2 U^{\dagger} &= T_d \otimes  \mathbbm{1}_{\mathcal{H}'}.
\label{p21}  
\end{align}
Thus, Eqs.(\ref{p182})-(\ref{p18}) together with Eqs.(\ref{p211})-(\ref{p21}) complete the proof.
\end{proof}

 The above Theorem also implies that $4(d-1)$ is the tight upper bound of the temporal inequality \eqref{temp}  for any fixed $d$ under Assumptions \ref{ass1}-\ref{ass2} when the condition (\ref{lemma1e}) is satisfied by each of the four observables $A_1,A_2,A_3,A_4$. Let us also remark that the certified operators $(a_1^{*} Z_d + 2 (a_1^*)^3 T_d)$ and $(a_1 Z_d - a_1^* T_d)$ are unitary matrices having $d$ distinct eigenvalues $\omega^i$ with $i \in \{ 0, \cdots, d - 1 \}$.   
 
One important point to be stressed is that the measurements certified here are the optimal Collins-Gisin-Linden-Massar-Popescu (CGLMP) measurements \cite{SATWAP,cglmp1,cglmp2}. Thus these measurements have wide ranges of applications both in quantum information theory and cryptography ranging from witnessing the dimension of a Hilbert-space \cite{CGLMP_app1,CGLMP_app2,CGLMP_app3}, reducing quantum communication complexity \cite{CGLMP_app4,CGLMP_app5}, advantages in communication game \cite{CGLMP_app6}, remote preparation of quantum states \cite{CGLMP_app7}, distribution of secure key \cite{CGLMP_app8,CGLMP_app9} to generating genuine randomness \cite{CGLMP_app10}. Further, these measurements have already been realized experimentally  \cite{opticalsetup,CGLMP_exp}.
 
Also note that the measurements certified here have also been self-tested in \cite{saha} based on the quantum violation of the SATWAP Bell inequalities. However,  as mentioned earlier, this self-testing protocol  requires entanglement between two spatially separated particles (spatial separation is required here in order to avoid locality loophole in Bell violation). On the contrary, our certification protocol can be realized using temporal quantum correlation pertaining to a single particle. %without using entanglement under the aforementioned assumptions. Here we want to mention that from a practical point of view, for a given dimension, preparing single particle maximally mixed state is much more easier than preparing spatially separated entangled two qudits.}} %is this red portion necessary??? This part is actually repeating portion of introduction as well as conclusion 

\subsection{Robustness analysis} \label{s5a}
So far, we have proposed a certification protocol for the $d$-outcome ideal measurement settings given in Theorem~\ref{theo} employing the maximal quantum violations of the temporal inequalities \eqref{temp} under Assumptions \ref{ass1}-\ref{ass2} when each of the measurements satisfies condition (\ref{lemma1e}). However, in a real experimental scenario there is always some unavoidable noise and hence, the ideal measurements are hardly realizable. Thus, the condition (\ref{lemma1e}) may not be satisfied and/or one may not get $\tau_d = 4(d-1)$ under Assumptions \ref{ass1}-\ref{ass2}. In such cases, we ask the question whether we can certify those measurements up to a certain threshold.  The term robustness of the certification protocol implies that the non-ideal measurements are close to the ideal ones if the correlations produced by the non-ideal measurements are close the  ideal correlations. In the following, we present the robustness analysis of our certification protocol for the following two cases: $\bullet$ when satisfying the condition (\ref{lemma1e})  is affected by the non-ideal observables, $\bullet$ when the magnitude of the temporal inequality (\ref{temp}) is affected by the non-ideal   observables, whereas satisfying the condition (\ref{lemma1e})  remains unaffected.

Suppose that in an experimental situation  under Assumptions \ref{ass1}-\ref{ass2}, instead of performing measurements of the ideal  unitary observables $A_i$ with $i \in \{1,2,3,4\}$   mentioned in the statements of Theorem \ref{theo}, measurements of the  non-ideal  observables $\widetilde{A}_1$, $\widetilde{A}_2$, $\widetilde{A}_3$, $\widetilde{A}_4$ are being performed, where each of these non-ideal observables does not satisfy the condition (\ref{lemma1e}).

%Note that for each of the  ideal  observables $A_i$ (with $i \in \{1,2,3,4\}$) satisfying (\ref{lemma1e}), the Kraus operators $K_i^{a_i} = \Pi_i^{a_i}$ for all $a_i \in \{0, \cdots, d-1\}$, where $\lbrace \Pi_i^{a_i} \rbrace$ are the set of projectors for measurement of $A_i$.

%Another important point to be stressed here is that the Fourier transformed observables of POVM are not, in general, unitary observables (see Sec. \ref{s5b} for details). 

%With these notations, we present two theorems that indicate that if the correlations produced by the non-ideal observables  $\widetilde{A}_1$, $\widetilde{A}_2$, $\widetilde{A}_3$, $\widetilde{A}_4$ are close to the ideal correlations produced by $A_1$, $A_2$, $A_3$, $A_4$, then the four non-ideal measurements are also close to the ideal measurements. Here, closeness is depicted using the concept of `distance' between operators expressed in terms of Hilbert-Schmidt norm.

Now, let us present the following theorem (for proof, see Appendix \ref{app4n}) that represents the robustness analysis for any potential imprecision in satisfying the condition (\ref{lemma1e}).

\begin{thm}
Suppose the non-ideal observables $\widetilde{A}_1$, $\widetilde{A}_2$, $\widetilde{A}_3$, $\widetilde{A}_4$ satisfy the following,
\begin{align}
p(a_i | \widetilde{A}_i) &=	p(a_i, a_i | \widetilde{A}_i, \widetilde{A}_i)  + \eta_i^{(a_i)} \hspace{0.1cm} \text{with} \hspace{0.2cm} \eta_i^{(a_i)}>0 \nonumber \\
	&\forall \, \, a_i \in \{0,  \cdots, d-1\}, \, \, \forall \, \, i \in \{1, 2, 3, 4\}.
	\label{lemma1er}
\end{align}
Then we have for all $a_i \in \{0,  \cdots, d-1\}$ and for all $i \in \{1, 2, 3, 4\}$
\begin{align}
	&\Bigg| \Bigg| \rho^{(\mathcal{P})} \left[ \widetilde{K}_i^{a_i^{\dagger}}  \widetilde{K}_i^{a_i}   -	 \Big( \widetilde{K}_i^{a_i^{\dagger}} \Big)^2 \Big( \widetilde{K}_i^{a_i}  \Big)^2 \right] \rho^{(\mathcal{P})} \nonumber \\
	& \hspace{0.9cm}  - \rho^{(\mathcal{P})} \left[ K_i^{a_i^{\dagger}}  K_i^{a_i}    -	 \Big( K_i^{a_i^{\dagger}} \Big)^2 \Big( K_i^{a_i}  \Big)^2  \right] \rho^{(\mathcal{P})} \Bigg| \Bigg|_{\text{HS}}  \nonumber \\
	& < \eta_i^{(a_i)},
	\label{lero7th}
\end{align}
where $\{\widetilde{K}_i^{a_i}\}$ and $\{K_i^{a_i}\}$ are the Kraus operators defined in \eqref{krausop} of the non-ideal observable $\widetilde{A}_i$ and the ideal observable $A_i$ respectively. Here,  $||\cdot||_{\text{HS}}$ denotes the Hilbert–Schmidt  norm. 
	\label{thm2n}
\end{thm}

In the above theorem, the condition $\eta_i^{a_i}>0$ for all $a_i \in \{0, \cdots, d-1\}$ and for all $i \in \{1, 2, 3, 4\}$ naturally appears due to the fact that $p(a_i, a_i | A_i, A_i) \leq p(a_i | A_i)$ for all $a_i \in \{0, \cdots, d-1\}$ and for all $i \in \{1, 2, 3, 4\}$ for any input state prepared by $\mathcal{P}$.

Next, let us consider that each of the four non-ideal observables $\widetilde{A}_1$, $\widetilde{A}_2$, $\widetilde{A}_3$, $\widetilde{A}_4$ satisfies the condition \eqref{lemma1e}. Hence, the measurement effects of each of these observables are mutually orthogonal projectors, which implies that these four non-ideal observables are unitary. Now, consider that   in the aforementioned scenario with any fixed value of $d$ the magnitude of  the temporal inequality (\ref{temp}) with these non-ideal unitary observables is $4(d-1)-\epsilon$, where $\epsilon$ is a positive  number.  Note that the maximum quantum violation of the temporal inequality (\ref{temp})  under Assumptions \ref{ass1}-\ref{ass2}  is $4(d-1)$ when each of the four  observables satisfies the condition \eqref{lemma1e}, thereby implying that $\epsilon$ cannot be negative.

Against the above backdrop, we present the following theorem (for proof, see Appendix \ref{app4}) that represents the robustness analyses of our certification scheme associated with the magnitude of quantum violation of the temporal inequality (\ref{temp}).

\begin{thm}
 If the quantum value of the temporal expression $\tau_d$ given in \eqref{temp} for any fixed $d$ realized by unknown unitary (Fourier
transformed) observables $\widetilde{A}_1$, $\widetilde{A}_2$, $\widetilde{A}_3$, $\widetilde{A}_4$ satisfying the condition (\ref{lemma1e})  is $[4(d-1)-\epsilon]$ with $\epsilon$ being  a positive  number, then the following relations hold true
\begin{align}
(i) \ &\Bigg| \Bigg|  \left[A_1 \big(a_1 A_3^{\dagger} + a_1^{*} \omega A_4^{\dagger}\big) \right] \rho^{(\mathcal{P})} \nonumber \\
& \hspace{0.4cm} -  \left[\widetilde{A}_1 \big(a_1 \widetilde{A}_3^{\dagger} + a_1^{*} \omega \widetilde{A}_4^{\dagger}\big) \right] \rho^{(\mathcal{P})}  \Bigg| \Bigg|_{\text{HS}}  < \sqrt{ \epsilon}.
\label{robustness1}
\end{align}
 \begin{align}
(ii)\ &\Bigg| \Bigg| \left[ A_2 \big(a^{*}_1 A_3^{\dagger} + a_1 A_4^{\dagger}\big) \right] \rho^{(\mathcal{P})} \nonumber \\
& \hspace{0.4cm} -  \left[ \widetilde{A}_2 \big(a^{*}_1 \widetilde{A}_3^{\dagger} + a_1 \widetilde{A}_4^{\dagger}\big) \right]  \rho^{(\mathcal{P})} \Bigg| \Bigg|_{\text{HS}}   < \sqrt{\epsilon}.
\label{robustness2}
\end{align}
\begin{align}
(iii) \ &\Bigg| \Bigg|  \left( A_4 A_3^{\dagger} - \omega A_3 A_4^{\dagger}\right)  \rho^{(\mathcal{P})} \nonumber \\
& \hspace{1.1cm} -  \left(\widetilde{A}_4 \widetilde{A}_3^{\dagger} - \omega \widetilde{A}_3 \widetilde{A}_4^{\dagger} \right) \rho^{(\mathcal{P})} \Bigg| \Bigg|_{\text{HS}}  \nonumber \\
&\leq 2  \sqrt{\epsilon} \big(2 + \sqrt{\epsilon} \big).
\label{robustness1new}
\end{align}
\begin{align}
(iv) \ &\Bigg| \Bigg|  \left(\omega A_2 A_1^{\dagger} -  A_1 A_2^{\dagger}\right)  \rho^{(\mathcal{P})} \nonumber \\
& \hspace{1.1cm}-  \left( \omega \widetilde{A}_2 \widetilde{A}_1^{\dagger} -  \widetilde{A}_1 \widetilde{A}_2^{\dagger}\right) \rho^{(\mathcal{P})} \Bigg| \Bigg|_{\text{HS}} \nonumber \\
&\leq 2 \sqrt{\epsilon} \big(2 + \sqrt{\epsilon} \big).
\label{robustness2new}
\end{align}
where $A_1,A_2,A_3,A_3$ are any set of unitary (Fourier
transformed) observables  that satisfies the condition (\ref{lemma1e}), achieves $\tau_d = 4(d-1)$ and thus satisfies Theorem \ref{theo}.
\label{thm2}
\end{thm}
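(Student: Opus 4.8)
The engine of the proof is the sum-of-squares identity \eqref{sos5}, which holds verbatim for the non-ideal unitary observables $\widetilde{A}_1,\widetilde{A}_2,\widetilde{A}_3,\widetilde{A}_4$, since its derivation used only unitarity, $|a_k|^2=1/2$, and the vanishing of the cross terms $(a_k)^2(\omega^k)^*+(a_k^*)^2$. Writing $\widetilde{P}_x^{(k)}$ and $\widetilde{B}_x^{(k)}$ for the operators \eqref{pxk}, \eqref{defBi} built from the tilded observables, the assumed value $\tau_d=4(d-1)-\epsilon$ together with $\rho^{(\mathcal{P})}=\mathbbm{1}/D$ gives $\text{Tr}\big[\rho^{(\mathcal{P})}\sum_{k,x}(\widetilde{P}_x^{(k)})^\dagger\widetilde{P}_x^{(k)}\big]=\epsilon$. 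Because every summand is positive and $\rho^{(\mathcal{P})}\ge0$, each term satisfies $\text{Tr}\big[\rho^{(\mathcal{P})}(\widetilde{P}_x^{(k)})^\dagger\widetilde{P}_x^{(k)}\big]\le\epsilon$; and since $\rho^{(\mathcal{P})}$ is proportional to the identity (so it commutes with everything and the trace is cyclic), this converts into $\|\widetilde{P}_x^{(1)}\rho^{(\mathcal{P})}\|_{\text{HS}}\le\sqrt{\epsilon}$ together with $\|(\widetilde{P}_x^{(1)})^\dagger\rho^{(\mathcal{P})}\|_{\text{HS}}\le\sqrt{\epsilon}$ for $x=1,2$. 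This is the single place the maximal mixedness is needed, and it is what turns the scalar SOS bound into operator-level control.

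Parts (i) and (ii) then follow at once from the $k=1$ terms. For any ideal set satisfying Theorem~\ref{theo} one has $P_x^{(1)}=\mathbbm{1}-A_xB_x^{(1)}=0$, i.e. $A_1(a_1A_3^\dagger+a_1^*\omega A_4^\dagger)=\mathbbm{1}$ and $A_2(a_1^*A_3^\dagger+a_1A_4^\dagger)=\mathbbm{1}$ (recall $A_i^{-1}=A_i^\dagger$). Hence the difference of the two bracketed products in \eqref{robustness1} applied to $\rho^{(\mathcal{P})}$ is exactly $\widetilde{P}_1^{(1)}\rho^{(\mathcal{P})}$, whose Hilbert--Schmidt norm is bounded by $\sqrt{\epsilon}$; the identical argument with $\widetilde{P}_2^{(1)}$ yields \eqref{robustness2}.

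For (iii) I would first record the exact algebraic identity, obtained by expanding $(\widetilde{B}_1^{(1)})^\dagger\widetilde{B}_1^{(1)}$ and using $|a_1|^2=1/2$, namely $(\widetilde{B}_1^{(1)})^\dagger\widetilde{B}_1^{(1)}-\mathbbm{1}=-\tfrac{i}{2}\omega^{-1/2}\big(\widetilde{A}_4\widetilde{A}_3^\dagger-\omega\widetilde{A}_3\widetilde{A}_4^\dagger\big)$, so that the commutator-type operator appearing in \eqref{robustness1new} equals $2i\omega^{1/2}\big[(\widetilde{B}_1^{(1)})^\dagger\widetilde{B}_1^{(1)}-\mathbbm{1}\big]$ and vanishes identically for any ideal set. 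Next, unitarity of $\widetilde{A}_1$ lets me write $(\widetilde{B}_1^{(1)})^\dagger\widetilde{B}_1^{(1)}-\mathbbm{1}=(\mathbbm{1}-\widetilde{P}_1^{(1)})^\dagger(\mathbbm{1}-\widetilde{P}_1^{(1)})-\mathbbm{1}=-\widetilde{P}_1^{(1)}-(\widetilde{P}_1^{(1)})^\dagger+(\widetilde{P}_1^{(1)})^\dagger\widetilde{P}_1^{(1)}$, and I bound the three pieces by the triangle inequality. The two linear pieces contribute $\sqrt{\epsilon}$ each, while the quadratic piece is genuinely second order: with $\rho^{(\mathcal{P})}=\mathbbm{1}/D$ appearing squared and $K:=(\widetilde{P}_1^{(1)})^\dagger\widetilde{P}_1^{(1)}\ge0$, the inequality $\text{Tr}[K^2]\le(\text{Tr}\,K)^2$ for positive $K$ gives $\|K\rho^{(\mathcal{P})}\|_{\text{HS}}\le\epsilon$. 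Combining the resulting $2\sqrt{\epsilon}+\epsilon$ with the modulus-$2$ prefactor reproduces exactly the stated bound $2\sqrt{\epsilon}(2+\sqrt{\epsilon})$.

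Part (iv) is the mirror statement for the pair $(\widetilde{A}_1,\widetilde{A}_2)$, and here I would exploit the time-reversal symmetry of $\tau_d$: since $\hat{\beta}_{\tau_d}$ is invariant under reversing the order of every product, the roles of the first measurements $\{A_1,A_2\}$ and the second measurements $\{A_3,A_4\}$ can be interchanged, yielding a companion relation in which $\widetilde{A}_1,\widetilde{A}_2$ play the certified role and $\omega\widetilde{A}_2\widetilde{A}_1^\dagger-\widetilde{A}_1\widetilde{A}_2^\dagger$ is reproduced up to a modulus-$2$ factor by an analogue of $(\widetilde{B}_1^{(1)})^\dagger\widetilde{B}_1^{(1)}-\mathbbm{1}$; the same linear-plus-quadratic estimate then delivers \eqref{robustness2new}. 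The main obstacle throughout is precisely this quadratic term: establishing that $(\widetilde{P}_x^{(1)})^\dagger\widetilde{P}_x^{(1)}$ enters only at order $\epsilon$ rather than $\sqrt{\epsilon}$ is what makes the bounds tight and is the step that relies most delicately on $\rho^{(\mathcal{P})}$ being maximally mixed. A secondary care point is to verify that the commutator identities feeding (iii) and (iv) hold as exact operator identities for \emph{arbitrary} unitary $\widetilde{A}_i$, not merely for the ideal ones, so that they may legitimately be combined with the approximate SOS estimates.
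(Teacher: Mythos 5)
Your proposal is correct and its skeleton coincides with the paper's: the same sum-of-squares identity \eqref{sos5} evaluated on $\widetilde{A}_i$ gives $\sum_{k,x}\mathrm{Tr}\bigl[\rho^{(\mathcal{P})}(\widetilde{P}_x^{(k)})^{\dagger}\widetilde{P}_x^{(k)}\bigr]=\epsilon$, positivity isolates each term, maximal mixedness converts the scalar bound into $\|\widetilde{P}_x^{(1)}\rho^{(\mathcal{P})}\|_{\text{HS}}\le\sqrt{\epsilon/D}<\sqrt{\epsilon}$, and parts (i)--(ii) follow because the ideal $P_x^{(1)}$ vanish. Where you diverge is in part (iii): the paper controls $\mathbbm{1}-(\widetilde{B}_1^{(1)})^{\dagger}\widetilde{B}_1^{(1)}$ via the factorization $\tfrac12[(\mathbbm{1}+\widetilde{B}^{\dagger}\widetilde{A}^{\dagger})(\mathbbm{1}-\widetilde{A}\widetilde{B})+(\mathbbm{1}-\widetilde{B}^{\dagger}\widetilde{A}^{\dagger})(\mathbbm{1}+\widetilde{A}\widetilde{B})]$ together with submultiplicativity of the Hilbert--Schmidt norm, whereas you expand directly into $-\widetilde{P}-\widetilde{P}^{\dagger}+\widetilde{P}^{\dagger}\widetilde{P}$ and tame the quadratic piece with $\mathrm{Tr}[K^2]\le(\mathrm{Tr}\,K)^2$ for $K\ge0$; both routes land on exactly $2\sqrt{\epsilon}(2+\sqrt{\epsilon})$ after the modulus-$2$ rescaling coming from $a_1^2\omega^{-1}$, and your version makes the ``linear terms give $\sqrt{\epsilon}$, quadratic term gives $\epsilon$'' structure more transparent. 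For part (iv) the paper writes down a second explicit sum-of-squares decomposition in terms of $C_x^{(k)}$ and reruns the argument; you instead appeal to the order-reversal symmetry of $\hat{\beta}_{\tau_d}$ to assert the existence of that companion decomposition. That symmetry argument is sound in spirit, but to be fully rigorous you would still need to exhibit the operators $Q_x^{(k)}=\mathbbm{1}-\widetilde{A}_{x+2}^{k}\widetilde{C}_x^{(k)}$ and verify they reproduce $4(d-1)\mathbbm{1}-\hat{\beta}_{\tau_d}$, as the paper does, since the grouping of terms in the two decompositions is not literally a relabelling of the first one. This is a small completeness point rather than a gap in the idea.
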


For a more general robustness analysis, one should consider that both  the magnitude of the temporal inequality (\ref{temp}) and satisfying  the condition (\ref{lemma1e}) are affected simultaneously by the non-ideal measurements. We leave this question  for future study.

%However, suppose that the experimentally measured value of $p(a_i | \widetilde{A}_i) - p(a_i, a_i | \widetilde{A}_i, \widetilde{A}_i)$  is $\eta_i^{a_i} \pm \delta_i^{a_i}$ for all $a_i$ and for all $i$, where each $\delta_i^{a_i} >0$ denotes the experimental error range. If $\delta_i^{a_i} > \eta_i^{a_i}$ for all $a_i$ and $i$, 
%However, if the experimentally measured value of $p(a_i, a_i | \widetilde{A}_i, \widetilde{A}_i)$ is within the statistical error range of the measured value of $p(a_i | \widetilde{A}_i)$, then it can be approximated that the observables $\widetilde{A}_i$ ($i \in \{1,2,3,4\}$) satisfy the conditions (\ref{lemma1e}). In such cases, Theorem \ref{thm2} alone presents the complete robustness analysis of our certification protocol.

However, if the difference between the experimentally measured values of $p(a_i, a_i | \widetilde{A}_i, \widetilde{A}_i)$ and $p(a_i | \widetilde{A}_i)$ is within the statistical error range, then it can be approximated that the observables $\widetilde{A}_i$ ($i \in \{1,2,3,4\}$) satisfy the conditions (\ref{lemma1e}). In such cases, Theorem \ref{thm2} alone presents the complete robustness analysis of our certification protocol.
 
Another important point to be stressed here is that if an additional assumption is taken into account, then the certification of the measurements presented in Eq.(\ref{certifiedmeasurements}) can be demonstrated without requiring the measurements to satisfy the conditions (\ref{lemma1e}) (see the next Sec. \ref{s5b} for details). Hence, in this case, Theorem \ref{thm2n} is not required for demonstrating robustness of our protocol.
%Above equation can be simplified to
%\begin{align}
%\frac{1}{D}&\text{Tr} \Big[ (B_x^{(k)\dagger}A_x^{k\dagger}-B_x^{'(k)\dagger}A_x^{'k \dagger}) + (A_x^k B_x^{(k)} - A_x^{'k} B_x^{'(k)}) \nonumber \\
%&+ (B_x^{'(k)\dagger} B_x^{'(k)} - B_x^{(k)\dagger} B_x^{(k)}) \Big]= f_{k,x}(\epsilon) \leq \epsilon ~~\forall k,x.
%\end{align}
%For $x=1$ and $k=1$,
%\begin{align}
%\frac{1}{D}&\text{Tr} \Big[  a_1^{*}(A_3 A_1^{\dagger} - A_3^{'}A_1^{'\dagger}) + a_1\omega^* (A_4 A_1^{\dagger} - A_4^{'}A_1^{'\dagger}) \nonumber \\
%&+ a_1(A_1 A_3^{\dagger} - A_1^{'}A_3^{'\dagger}) + a_1^* \omega(A_1 A_4^{\dagger} - A_1^{'}A_4^{'\dagger}) \nonumber \\
%&+ a_1^2\omega^* (A_4^{'}A_3^{'\dagger} - A_4A_3^{\dagger}) + a_1^{*2}\omega (A_3^{'}A_4^{'\dagger} - A_3A_4^{\dagger}) \Big] \nonumber \\
%&= f_{1,1}(\epsilon) \leq \epsilon . \nonumber
%\end{align}
%For $x=2$ and $k=1$,
%\begin{align}
%\frac{1}{D}&\text{Tr} \Big[  a_1(A_3 A_2^{\dagger} - A_3^{'}A_2^{'\dagger}) + a_1^* (A_4 A_2^{\dagger} - A_4^{'}A_2^{'\dagger}) \nonumber \\
%&+ a_1^*(A_2 A_3^{\dagger} - A_2^{'}A_3^{'\dagger}) + a_1 (A_2 A_4^{\dagger} - A_2^{'}A_4^{'\dagger}) \nonumber \\
%&+ a_1^2 (A_3^{'}A_4^{'\dagger} -A_3A_4^{\dagger})+ a_1^{*2} (A_4^{'}A_3^{'\dagger} -A_4A_3^{\dagger}) \Big] \nonumber \\
%&= f_{1,2}(\epsilon) \leq \epsilon . \nonumber
%\end{align}

\subsection{Robust certification without using Lemma \ref{lemma1} or condition \eqref{lemma1e}} \label{s5b}

Now, we will show that our certification protocol can be formulated based on the quantum violation of the temporal inequalities (\ref{temp}) alone without using the condition mentioned in Lemma \ref{lemma1} if we consider another assumption as described below together with the Assumptions \ref{ass1} and \ref{ass2}. 

%Before proceeding, let us consider that $A_i$ (with $i \in \{1,2,3,4\}$) is a generalized quantum measurement with outcome $a_i \in \{0, \cdots, d-1\}$ represented by a Positive Operator-Valued Measure (POVM): $A_i \equiv \{M_i^{0}, \cdots, M_i^{d-1} \}$ with each $M_i^{a_i}$ being an Hermitian operator such that $M_i^{a_i} \geq 0$ for all $a_i$ and   $\sum_{a_i=0}^{d-1} M_i^{a_i} = \mathbbm{1}$.  

%Let us state the extra assumption that is required in this subsection.

\begin{assumption}
The measurements $A_i \equiv \{M_i^{a_i} \}$ with $i \in \{1,2,3,4\}$ are realized in a particular way such that the Kraus operators $K_i^{a_i}$ are Hermitian or, equivalently, $K_i^{a_i} = \sqrt{M_i^{a_i}}$ for all $a_i$ and for all $i$, where $\{M_i^{a_i}\}$ are the measurement effects. In other words, any state updates due to a measurement following the the L\"{u}ders rule.
\label{ass3}
\end{assumption}
Note that the L\"{u}ders rule for state evolution due to quantum measurement appears in the context of unsharp measurements \cite{luders1} and other scenarios \cite{luders2,luders3}. Now, under Assumptions \ref{ass1}-\ref{ass3} the above-mentioned certification protocol using the temporal inequalities (\ref{temp}) alone is robust as stated below.

%In general, the measurement of $A_x$ (with $x \in \{1,2,3,4\}$) with outcome $a_i \in \{0, \cdots, d-1\}$ can be considered as POVM: $A_i \equiv \{M_i^{0}, \cdots, M_i^{d-1} \}$ with each $M_i^{a_i}$ being an Hermitian operator such that $M_i^{a_i} \geq 0$ for all $a_i$ and   $\sum_{a_i=0}^{d-1} M_i^{a_i} = \mathbbm{1}$.

\begin{thm}\label{theo4}
Suppose that in the scenario considered by us with any fixed value  of $d$  under Assumptions \ref{ass1}-\ref{ass3}, the quantum violation of the temporal inequality (\ref{temp})  is $4(d-1)$, which is achieved by unknown measurements $A_1,A_2,A_3,A_4$ acting on some $\mathbb{C}^D$. Then, for any $d$, $\mathbb{C}^D= \mathbb{C}^d \otimes \mathcal{H}'$ and there exists a  unitary transformation $U$ such that Eq. \eqref{certifiedmeasurements} holds true. 

Moreover, under Assumptions \ref{ass1}-\ref{ass3}, if a quantum violation $4(d-1)-\epsilon$ is achieved by non-ideal measurements $\widetilde{A}_1,\widetilde{A}_2,\widetilde{A}_3,\widetilde{A}_4$ for any non-negative $\epsilon$, then the following two relations hold for all $ k \in \{1, \cdots, d-1\}$,
\begin{align} 
 \Bigg| \Bigg| & \left[A_1^{(k)}  \left( a_k  A_3^{(k)^\dagger} + a_k^{*} \omega^k A_4^{(k)^\dagger} \right) \right] \rho^{(\mathcal{P})} \nonumber \\
 & \hspace{0.65cm} -  \left[\widetilde{A}_1^{(k)} \left( a_k \widetilde{A}_3^{(k)^\dagger} + a_k^{*} \omega^k \widetilde{A}_4^{(k)^\dagger} \right) \right] \rho^{(\mathcal{P})}  \Bigg| \Bigg|_{\text{HS}}  \nonumber \\
 &< \sqrt{\epsilon} ,
 \label{robustnessnew1}
 \end{align}
 \begin{align} 
 \Bigg| \Bigg| & \left[A_2^{(k)}  \left( a_k^{*} A_3^{(k)^\dagger} + a_k  A_4^{(k)^\dagger} \right) \right] \rho^{(\mathcal{P})} \nonumber \\
 & \hspace{0.65cm} -  \left[\widetilde{A}_2^{(k)} \left( a_k^{*} \widetilde{A}_3^{(k)^\dagger} + a_k  \widetilde{A}_4^{(k)^\dagger} \right) \right] \rho^{(\mathcal{P})}  \Bigg| \Bigg|_{\text{HS}}  \nonumber \\
 &< \sqrt{\epsilon} .
 \label{robustnessnew2}
 \end{align}
\end{thm}

\begin{proof}
Let us first note that the expression of the joint probability $p(a_i,a_j|A_i,A_j)$ given by \eqref{jointsimple} immediately reduces to 
\begin{align}
p(a_i,a_j|&A_i,A_j) =\text{Tr} \Big[ M_j^{a_j} \,M_i^{a_i} \, \rho^{(\mathcal{P})} \Big] \nonumber \\
&\, \, \, \forall \, i, j \in \{1, 2, 3, 4\}, \nonumber \\
& \, \, \, \text{and} \, \, \,  \forall \, a_i, a_j \in \{0,  \cdots, d-1\} 
\label{pro1newnewnewnew}
\end{align} 
if each Kraus operator $K_i^{a_i}$ is taken to be $\sqrt{M_i^{a_i}}$. Consequently, using Eqs.(\ref{exp1}), (\ref{obsnew}) and (\ref{pro1newnewnewnew}), we have for all $k, l =1, \cdots, d-1$ and for all $i, j \in \{1, 2, 3, 4\}$,
\begin{align}
 \langle A_i^{(k)} \, A_j^{(l)}\rangle = \text{Tr}\Big[A_i^{(k)} \, A_j^{(l)} \, \rho^{(\mathcal{P})} \Big]. \nonumber
\end{align}
Therefore, in this case,  the left hand sides of the temporal inequalities (\ref{temp}) under Assumptions \ref{ass1}-\ref{ass3}  can be expressed in the form (\ref{temp2}) without requiring the condition (\ref{lemma1e}) to be satisfied by the four measurements.

The only difference here from the previous calculation is the fact that  $A_i^{(k)^\dagger} A_i^{(k)} \leq \mathbbm{1}$ for all $i$ and $k$. %Also, $A_i^{(k)}$ for any $i,k$ cannot be considered as the $k$-th power of $A_i$.
Following the exact steps done from Eq.\eqref{temp2} to Eq.\eqref{sos5}, one obtains
\begin{align}
    \sum_{k=1}^{d-1} \sum_{x=1}^{2} \Big[ \Big(P_x^{(k)}\Big)^{\dagger} \, \Big(P_x^{(k)}\Big) \Big] \leq  4 (d-1) \, \mathbbm{1} - \hat{\beta}_{\tau_d},
    \label{sosnewpovm}
\end{align} 
where $P_x^{(k)}$ is defined in \eqref{pxk} and the equality in (\ref{sosnewpovm}) holds only if $A_i^{(k)^\dagger} A_i^{(k)} = \mathbbm{1}$ for all $i$ and $k$. 
Since the left hand side of Eq. (\ref{sosnewpovm}) is sum of positive operators, we also have Eq. \eqref{maxqvalue}. Now, we know a quantum realization \eqref{certifiedmeasurements} that achieves $\tau_d=4(d-1)$ and at the same time satisfies Assumptions \ref{ass1}-\ref{ass3}. This implies that, even  without condition \eqref{lemma1e}, the maximum quantum magnitude of $\tau_d$ is $4(d-1)$  under Assumptions \ref{ass1}-\ref{ass3}.
More importantly, when $\text{Tr}\big[ \rho^{(\mathcal{P})} \, \hat{\beta}_{\tau_d} \big] = 4(d-1)$ is attained, we must have equality in Eq. \eqref{sosnewpovm}.
Consequently, the equality in the sum-of-square decomposition \eqref{sosnewpovm} implies $A_i^{(k)^\dagger} A_i^{(k)} = \mathbbm{1}$ for all $i$ and $k$.

Now, as shown in \cite{Kaniewski}, $A_i^{(k)^\dagger} A_i^{(k)} = \mathbbm{1}$ if and only if the measurement effects $ \{M_i^{a_i}\}$ are mutually orthogonal projectors. It is, therefore, implied that when the maximal quantum violation of the temporal inequality (\ref{temp}) with any fixed $d$ under the Assumptions \ref{ass1}-\ref{ass3} is attained, then the observables $A_i^{(k)}$ are unitary and $A_i^{(k)} = A_i^{k}$ for all $i$ and $k$. Therefore, in this case, the whole proof of Theorem \ref{theo} remains valid without invoking the condition (\ref{lemma1e}). In other words, one can certify the observables $A_i$ with $i \in \{1, 2, 3, 4\}$ only using the temporal inequalities (\ref{temp}) under Assumptions \ref{ass1}-\ref{ass3}.

For the robustness part, note that the relations (\ref{robustnessnew1}-\ref{robustnessnew2}) are similar to the previously derived robustness relations (\ref{robustness1}-\ref{robustness2}) in Theorem \ref{thm2}. One can verify that (\ref{robustnessnew1}-\ref{robustnessnew2}) can be derived following the exact steps used for deriving (\ref{robustness1}-\ref{robustness2}) taking all $k \in \{1,\cdots, d-1\}$. The only difference in the present case is the fact that $\widetilde{A}_i^{(k)^\dagger} \widetilde{A}_i^{(k)} \leq \mathbbm{1}$ for all $k \in \{1, \cdots, d-1\}$ instead of strict equality in case of each non-ideal observable $\widetilde{A}_i$ with $i \in \{1,2,3,4\}$. 
%It can also be checked that the following conditions hold even when  the condition (\ref{lemma1e}) is not used and when the non-ideal generalized observables $\widetilde{A}^{(k)}_i$ satisfying (\ref{conditionforpovm}) are not unitary:where the quantum value of the temporal expression $\tau_d$ given in \eqref{temp} realized by the non-ideal observables $\widetilde{A}_1$, $\widetilde{A}_2$, $\widetilde{A}_3$, $\widetilde{A}_4$ under Assumptions \ref{ass1}-\ref{ass3}  is $[4(d-1)-\epsilon]$ with $\epsilon>0$. The above two relations (\ref{robustnessnew1}-\ref{robustnessnew2}) can be derived following similar calculations adopted in Appendix \ref{app4} for deriving (\ref{robustness1}-\ref{robustness2}).  %Note that in the above relations (\ref{robustnessnew1}-\ref{robustnessnew2}), one can replace $A_i^{(k)}$ by $A_i^k$ for all $i \in \{1,2,3,4\}$ and for all $k \in \{1, \cdots, d-1\}$ in case of the ideal projective measurements only.
\end{proof}

\section{Secure randomness certification} \label{s6}
Here, we  present a protocol for the  secure certification  of randomness as a 
relevant application of our proposed formalism for the certification of $d$-outcome quantum measurements. In particular, let us consider that the Assumptions \ref{ass1} and \ref{ass2} are satisfied    in the scenario considered by us with any fixed value of $d$. Further, we also assume that the condition (\ref{lemma1e}) is satisfied by the each of the four  unitary observables $A_1$, $A_2$, $A_3$, $A_4$ and the magnitude of the temporal inequality (\ref{temp}) is $4(d-1)$ using the above four measurements.

Consider a scenario, where a party, say, Eve  prepares the initial state. In other words, the internal functioning of the preparation device $\mathcal{P}$ is controlled by Eve. In each experimental run, Eve prepares a pure state $|\psi^{(\mathcal{P})}_x\rangle$ in such a way that, on average, the initial state becomes  $\rho^{(\mathcal{P})} = \mathbbm{1}/D$. Let us assume that Eve knows beforehand which two measurements will be performed sequentially in each run. In such a scenario, Eve can always predict the outcome of the first measurement. For example, consider an experimental run in which measurement of $A_i$ is performed at first on the preparation, and then the measurement of $A_j$ is performed (where $i, j \in \{1,2,3,4\}$). In this case, Eve can predict the outcome of the first measurement by preparing an eigenstate of $A_i$.  Hence, no randomness can be   certified securely from the first measurement and, 
therefore, we will focus on randomness certification using the outcome statistics of the second measurement.

Since, we are not interested in the randomness certification using the outcome statistics of the first measurement, it excludes certifying classical randomness associated with the preparation $\rho^{(\mathcal{P})} = \mathbbm{1}/D$. In other words, although the preparation device prepares a maximally mixed state, the randomness from the maximally mixed state is not genuine quantum randomness,  rather this randomness is a manifestation of the classical convex mixture of different pure states. However, after the first measurement, the state collapses to a different pure state, and thus the outcome of the second measurement provides genuine quantum randomness. This is why we will certify randomness from  $p(a_j|A_i,A_j,a_i) = p(a_i, a_j|A_i, A_j)/p(a_i|A_i)$.

Let us now define the measure of randomness, $\mathcal{H}(A_i,A_j)$ for a fixed set of two  observables $\lbrace A_i, A_j \rbrace$ as,
\begin{align}
    \mathcal{H}(A_i,A_j) & =  \min_\mathcal{S} \left[ - \sum_{a_i=0}^{d-1} p(a_i|A_i) \sum_{a_j=0}^{d-1} \alpha \log_2 \alpha \right], \nonumber \\
    &\text{with} \, \, \alpha = p(a_j|A_i,A_j,a_i)
    \label{ranquan}
\end{align}
and $\mathcal{S}$ denoting all possible strategies of Eve for preparing $A_i$ reproducing the observed probabilities.
The above quantification is based on the Shannon entropy that characterizes the average randomness involved in the probability distributions \cite{shannon}. Further, we have taken average of it over all possible outcomes of the first measurement. 

%Note that the Shannon entropy characterizes the average randomness involved in the probability distributions \cite{shannon}. On the other hand, the concept of min-Entropy, that is widely used to quantify the randomness associated with a probability distribution, characterises the minimum unpredictability/randomness involved in the probability distribution  \cite{minen}. In our certification scheme, the condition (\ref{lemma1e}) needs to be satisfied for each outcome of each of the four measurements involved. Further, the certification protocol relies on the temporal inequality (\ref{temp}) which involves correlation functions. Hence,  all outcomes of all of the four measurements  are equally important in our certification protocol. Therefore, the natural choice to quantify the randomness that can be certified in our certification protocol is the quantification based on Shannon entropy. 

Moreover, to quantify the genuine or guaranteed randomness we have to consider the minimum in (\ref{ranquan}) over all possible Eve's strategy of preparing the four  observables $A_i$ that satisfy the condition (\ref{lemma1e}) and gives $\tau_d = 4(d-1)$ under Assumptions \ref{ass1}-\ref{ass2}.

Since, the condition (\ref{lemma1e}) is satisfied by the each of the four unknown  observables $A_1$, $A_2$, $A_3$, $A_4$ and the magnitude of the temporal inequality (\ref{temp})  is $4(d-1)$, Theorem \ref{theo} implies that there exists $U$ such that
\begin{align}
U \Pi_1^{a_1} U^{\dagger} & = \widetilde{\Pi}_1^{a_1} \otimes  \mathbbm{1}_{\mathcal{H}'} = |Z^{a_1}_d\rangle \langle Z^{a_1}_d| \otimes  \mathbbm{1}_{\mathcal{H}'},  \nonumber \\
U \Pi_2^{a_2} U^{\dagger} &= \widetilde{\Pi}_2^{a_2} \otimes  \mathbbm{1}_{\mathcal{H}'}= |T^{a_2}_d\rangle \langle T^{a_2}_d| \otimes  \mathbbm{1}_{\mathcal{H}'},  \nonumber \\
U \Pi_3^{a_3} U^{\dagger} &= \widetilde{\Pi}_3^{a_3} \otimes  \mathbbm{1}_{\mathcal{H}'}= |M^{a_3}_d\rangle \langle M^{a_3}_d| \otimes  \mathbbm{1}_{\mathcal{H}'},  \nonumber \\
U \Pi_4^{a_4} U^{\dagger} &= \widetilde{\Pi}_4^{a_4} \otimes  \mathbbm{1}_{\mathcal{H}'}= |N^{a_4}_d\rangle \langle N^{a_4}_d| \otimes  \mathbbm{1}_{\mathcal{H}'}, \nonumber \\
&\hspace{1cm} \forall \, \, a_1, a_2, a_3, a_4 \in \{0, 1, \cdots, d-1\},
\label{projectorsran}
\end{align}
where $\lbrace \Pi_i^{a_i} \rbrace$ are the set of  mutually orthogonal projectors for measurement of $A_i$; $|Z^{a_1}_d\rangle$ is the eigenstate of $Z_d$  with eigenvalue $\omega^{a_1}$, $|T^{a_2}_d\rangle$ is the eigenstate of $T_d$  with eigenvalue $\omega^{a_2}$, $|M^{a_3}_d\rangle$ is the eigenstate of $M_d = a_1^{*} Z_d + 2 (a_1^*)^3 T_d$  with eigenvalue $\omega^{a_3}$, $|N^{a_4}_d\rangle$ is the eigenstate of $N_d = a_1 Z_d - a_1^* T_d $  with eigenvalue $\omega^{a_4}$. 
 Note that satisfying the condition (\ref{lemma1e})  by the each of the four observables $A_1$, $A_2$, $A_3$, $A_4$ implies that the measurement effects of each of these four observables are mutually orthogonal projectors. Moreover, if the magnitude of the temporal inequality (\ref{temp}) is achieved to be $4(d-1)$ using the above four observables, then Theorem \ref{theo} implies that these projectors are rank-one. This follows from the fact that each of these four measurements has $d$ number of possible outcomes and the dimension of each of the operators $Z_d$, $T_d$, $M_d$, $N_d$ is $d$. Thus,  we have taken each of the projectors in the above Eq.(\ref{projectorsran}) to be rank-one.

%It may be noted here that the randomness should be guaranteed from the outcomes of the measurements but not from the maximally mixed state. Eve may know the sequence of pure states that are being prepared by the preparation device which makes the maximally mixed state. Also, in a particular run of the experiment, Eve may prepares the state $|\alpha\rangle^{(\mathcal{P})}$ (corresponding to $\rho^{(\mathcal{P})}$) and the first measurement on the prepared state is say $A_i$. Now if $|\alpha\rangle^{(\mathcal{P})}$ is an eigenstate of $A_i$, then the outcome of $A_i$ will be known to Eve. Therefore, we would like to obtain the random bits from the second measurements. In other words, although the preparation device prepares  maximally mixed state, the randomness from the maximally mixed state is not genuine quantum randomness, rather it is the classical convex mixture of different pure states. However, after the 1st measurement the state collapse to another state and then outcome of the second measurements will provide genuine quantum randomness. Therefore, we will obtain randomness from $p(a_j|A_i,A_j,a_i)$. 

Next, we  evaluate the expression of $p(a_j|A_i,A_j,a_i)$ in order to find out  $\mathcal{H}(A_i,A_j)$ for a given $\lbrace A_i, A_j \rbrace$. From Eq.(\ref{projectorsran}), we can write the following,
\begin{widetext}
\begin{align}
    p(a_j|A_i,A_j,a_i)     &= \frac{\text{Tr} \Bigg[ \Big( \widetilde{\Pi}_i^{a_i} \otimes \mathbbm{1}_{\mathcal{H}'} \Big) \Big(U \rho^{(\mathcal{P})} U^{\dagger} \Big) \Big(\widetilde{\Pi}_i^{a_i} \otimes \mathbbm{1}_{\mathcal{H}'} \Big) \Big(\widetilde{\Pi}_j^{a_j} \otimes \mathbbm{1}_{\mathcal{H}'} \Big) \Bigg]}{\text{Tr} \Bigg[ \Big(\widetilde{\Pi}_i^{a_i} \otimes \mathbbm{1}_{\mathcal{H}'} \Big) \Big(U \rho^{(\mathcal{P})} U^{\dagger} \Big) \Bigg]}.
    \nonumber 
\end{align}
\end{widetext}
Since $\rho^{(\mathcal{P})} = \mathbbm{1}/D$, the above expression can be simplified as $ p(a_j|A_i,A_j,a_i) = \frac{\text{Tr} \Big[ \widetilde{\Pi}_i^{a_i} \widetilde{\Pi}_j^{a_j} \widetilde{\Pi}_i^{a_i} \otimes \mathbbm{1}_{\mathcal{H}'}  \Big]}{\text{Tr}\Big[ \widetilde{\Pi}_i^{a_i}  \otimes \mathbbm{1}_{\mathcal{H}'}  \Big]} = \text{Tr} \Big[ \widetilde{\Pi}_i^{a_i} \widetilde{\Pi}_j^{a_j} \Big] = \Big| \langle A_i^{a_i} | A_j^{a_j} \rangle \Big|^2$, where $| A_i^{a_i} \rangle$ ($| A_j^{a_j} \rangle$) is the eigenstate of $A_i$ ($A_j$) with eigenvalue $\omega^{a_i}$ ($\omega^{a_j}$). In the above, we have used the fact that $\widetilde{\Pi}_i^{a_i}$ is a rank-one projector, implying that $ \text{Tr}\Big[ \widetilde{\Pi}_i^{a_i}   \Big] =1$. 

On the other hand, we have for $\rho^{(\mathcal{P})} = \mathbbm{1}/D$
\begin{align}
    p(a_i|A_i) &= \text{Tr} \Bigg[ \Big(\widetilde{\Pi}_i^{a_i} \otimes \mathbbm{1}_{\mathcal{H}'} \Big) \Big(U \rho^{(\mathcal{P})} U^{\dagger} \Big) \Bigg] \nonumber \\
    & = \frac{1}{D} \text{Tr} \Bigg[ \Big(\widetilde{\Pi}_i^{a_i} \otimes \mathbbm{1}_{\mathcal{H}'} \Big) \Bigg] \nonumber \\
    & = \frac{1}{d},
    \nonumber 
\end{align}
where we have used $ \text{Tr}\Big[ \widetilde{\Pi}_i^{a_i}   \Big] =1$ and $ \text{Tr}\Big[ \mathbbm{1}_{\mathcal{H}'}  \Big] =D/d$.
Therefore, Eq.(\ref{ranquan}) becomes
\begin{widetext}
\begin{align}
     \mathcal{H}(A_i,A_j) = \min_\mathcal{S} \left[ -  \sum_{a_j=0}^{d-1} \Big| \langle A_i^{a_i} | A_j^{a_j} \rangle \Big|^2 \log_2 \Big| \langle A_i^{a_i} | A_j^{a_j} \rangle \Big|^2 \right].
    \label{ranquan21}
\end{align}
\end{widetext}
As mentioned earlier, $\mathcal{S}$ denotes all possible strategies of Eve for preparing $A_i$ reproducing the observed probabilities. All these strategies are connected unitarily as can be seen from Eq.(\ref{projectorsran}). However, the expression  $\sum_{a_j=0}^{d-1} \Big| \langle A_i^{a_i} | A_j^{a_j} \rangle \Big|^2 \log_2 \Big| \langle A_i^{a_i} | A_j^{a_j} \rangle \Big|^2$ is independent of the unitary $U$. Hence, from Eq.(\ref{ranquan21}), we have for the present case
\begin{align}
     \mathcal{H}(A_i,A_j) =  -  \sum_{a_j=0}^{d-1} \Big| \langle A_i^{a_i} | A_j^{a_j} \rangle \Big|^2 \log_2 \Big| \langle A_i^{a_i} | A_j^{a_j} \rangle \Big|^2 .
    \label{ranquan2}
\end{align}

Here it should be mentioned that if $i=j$, i.e., if $A_i=A_j$, then from the above Eq.(\ref{ranquan2}), we have $\mathcal{H}(A_i,A_j) = 0$. Hence, no randomness can be certified.

Next, consider that $i \neq j$. At first, let us take $i=1$ and $j=2$. In other words, we are considering the case when the measurement of $A_1$ is performed at first on the initial preparation and then the measurement of  $A_2$ is performed. For this case, Eq. (\ref{ranquan2}) reduces to
\begin{align}
    \mathcal{H}(A_1,A_2) = -  \sum_{a_2=0}^{d-1} \Big| \langle Z_d^{a_1} | T_d^{a_2} \rangle \Big|^2 \log_2 \Big| \langle Z_d^{a_1} | T_d^{a_2} \rangle \Big|^2.
    \label{ranquan3}
\end{align}
This can be evaluated using the expressions of the eigenstates of $Z_d$ and $T_d$. The spectral decomposition of $Z_d$ and $T_d$ are given by \cite{saha},
\begin{align}
    Z_d = \sum_{q=0}^{d-1}\omega^q |q\rangle \langle q|, ~~\text{and}~~  T_d =  \sum_{r=0}^{d-1}\omega^r |r\rangle \langle r|_{T_d}
    \nonumber
\end{align}
%and 
%\begin{align}
 %   T_d =  \sum_{r=0}^{d-1}\omega^r |r\rangle \langle r|_{T_d}
%\end{align}
with
\begin{align}
  |r\rangle_{T_d} = \frac{2}{d} \sum_{q=0}^{d-1} (-1)^{\delta_{q,0}} \frac{\omega^{-\frac{q}{2}}}{1- \omega^{r-q -\frac{1}{2}}} |q\rangle
  \nonumber
\end{align}
Using these, we have
\begin{align}
   \Big| \langle Z_d^{a_1} | T_d^{a_2} \rangle \Big|^2= \frac{4}{d^2} \frac{1}{ \left \lvert 1-\omega^{a_1-a_2-\frac{1}{2}} \right \rvert^2} .
   \label{overlap}
\end{align}
 Now, replacing the variable $(a_1-a_2)$ by $x$ and using the fact that $\omega^d=1$, we get
\iffalse
\begin{widetext}
\begin{align}
   \sum_{a_2=0}^{d-1} \frac{4}{d^2} \frac{1}{ \left \lvert 1-\omega^{a_1-a_2-\frac{1}{2}} \right \rvert^2} \log_2 \frac{4}{d^2} \frac{1}{ \left \lvert 1-\omega^{a_1-a_2-\frac{1}{2}} \right \rvert^2} 
   &= \sum_{x=0}^{d-1} \frac{4}{d^2} \frac{1}{ \left \lvert 1-\omega^{x-\frac{1}{2}} \right \rvert^2} \log_2 \frac{4}{d^2} \frac{1}{ \left \lvert 1-\omega^{x-\frac{1}{2}} \right \rvert^2}   \, \forall a_1  \in \{0,1, \cdots, d-1\}. 
   \label{inde}
\end{align}
\end{widetext}
Therefore, using Eqs.(\ref{ranquan3}), (\ref{overlap}), (\ref{inde}), we can write the following,
\fi
\begin{widetext}
\begin{align}
     \mathcal{H}(A_1,A_2) 
%    &= - \sum_{x=0}^{d-1} \frac{4}{d^2} \frac{1}{ \left \lvert 1-\omega^{x-\frac{1}{2}} \right \rvert^2} \log_2 \frac{4}{d^2} \frac{1}{ \left \lvert 1-\omega^{x-\frac{1}{2}} \right \rvert^2}  \sum_{a_1=0}^{d-1} p(a_1|A_1) \nonumber \\
     &= - \sum_{x=0}^{d-1} \frac{4}{d^2} \frac{1}{ \left \lvert 1-\omega^{x-\frac{1}{2}} \right \rvert^2} \log_2 \frac{4}{d^2} \frac{1}{ \left \lvert 1-\omega^{x-\frac{1}{2}} \right \rvert^2}.
    \label{ranquan41}
\end{align}
\end{widetext}
%where we have used the fact that $\sum_{a_1=0}^{d-1} p(a_1|A_1) =1$.

Next, let us take $i=2$ and $j=1$. In other words, we are considering the case when the measurement of $A_2$ is performed at first on the initial preparation and then the measurement of $A_1$ is performed. In this case, one can easily check that 
\begin{align}
     \mathcal{H}(A_2,A_1) &= \mathcal{H}(A_1,A_2) \nonumber 
 %    &= - \sum_{x=0}^{d-1} \frac{4}{d^2} \frac{1}{ \left \lvert 1-\omega^{x-\frac{1}{2}} \right \rvert^2} \log_2 \frac{4}{d^2} \frac{1}{ \left \lvert 1-\omega^{x-\frac{1}{2}} \right \rvert^2},
    \label{ranquan42}
\end{align}

We now calculate $\mathcal{H}(A_1,A_2)$ for several values of $d$ and the variation is plotted in Fig. \ref{fig2}. From this figure, it is evident that $\mathcal{H}(A_1,A_2)$ increases with $d$. 

 \begin{figure}[t]
    \centering
    \includegraphics[scale=0.3]{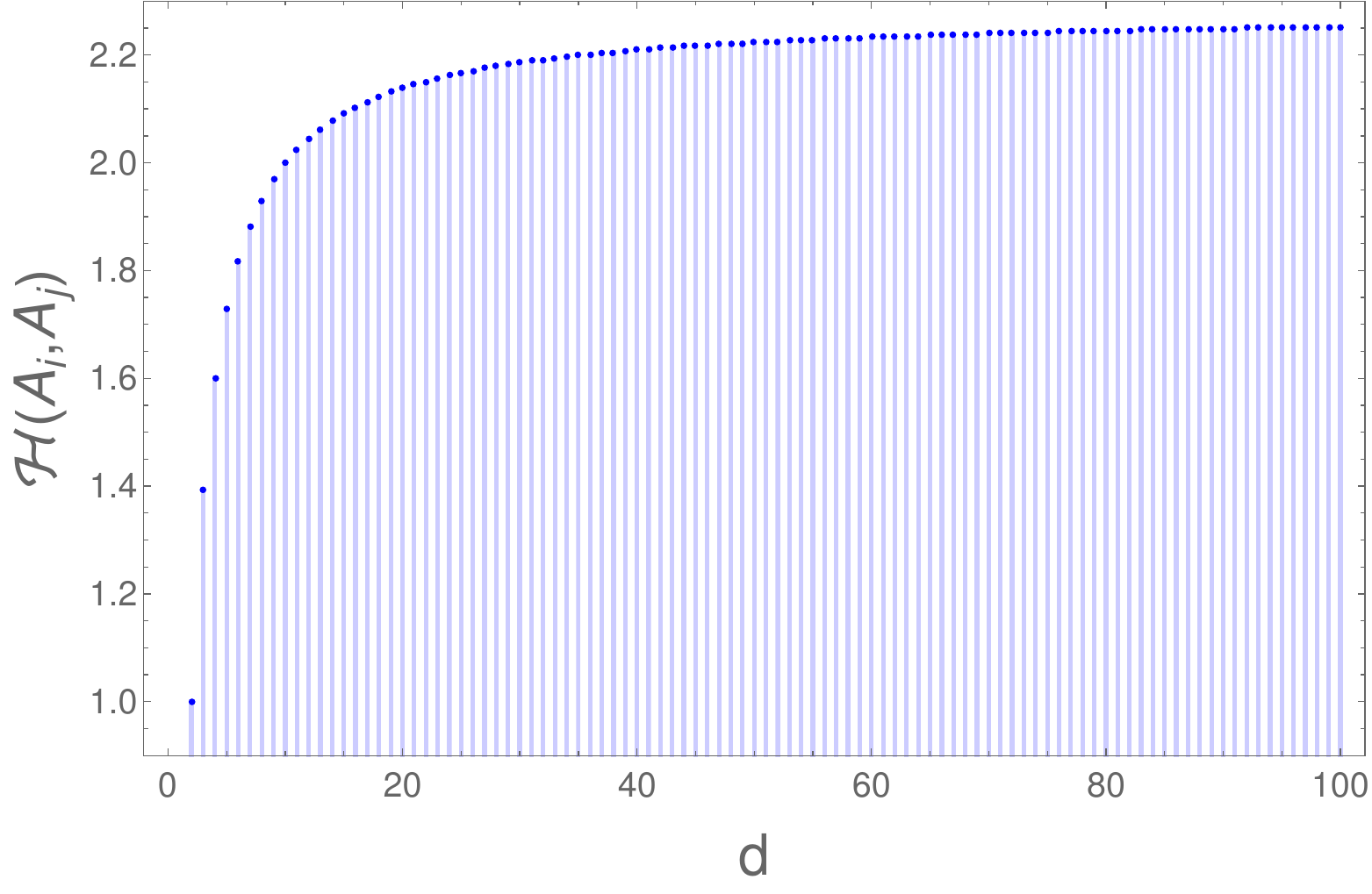}
    \caption{The blue dotted lines demonstrate the variation of the amount of output  randomness $\mathcal{H}(A_i,A_j)$ produced in the present certification protocol with the number of outcomes  $d$. Here, $\mathcal{H}(A_i,A_j)$ is the amount of randomness produced from the second measurement when the measurement of $A_i$ is performed at first on the initial preparation and then the measurement of $A_j$ is performed. This plot is for four possible cases- (1) when $i=1$ and $j=2$, (2) when $i=2$ and $j=1$, (3) when $i=3$ and $j=4$, (4)  when $i=4$ and $j=3$.}
    \label{fig2}
\end{figure}

Since $M_d = a_1^{*} Z_d + 2 (a_1^*)^3 T_d$  and $N_d = a_1 Z_d - a_1^* T_d $ are connected to $Z_d$ and $T_d$, respectively, with the same unitary $W$ (see
proof of Theorem \ref{theo}), we have $\Big| \langle A_1^{u} | A_2^{v} \rangle \Big|^2 = \Big| \langle A_3^{u} | A_4^{v} \rangle \Big|^2$ for all $u, v \in \{0,1, \cdots, d-1\}. $. Hence, for the case with $i=3$, $j=4$ and for the case with $i=4$, $j=3$, we have 
\begin{align}
   \mathcal{H}(A_3,A_4) = \mathcal{H}(A_4,A_3)  = \mathcal{H}(A_1,A_2).
\end{align}
 
For other combinations of $i \neq j \in \{1,2,3,4\}$, we have not evaluated $\mathcal{H}(A_i,A_j)$ in the present study, and we leave it as an open question.

\section{Conclusions} \label{s7}
Formulating  efficient certification protocols for quantum measurements  requiring  fewer assumptions or trusts on the preparation device is a worthwhile enterprise that will be helpful for establishing secure quantum information theoretic and cryptographic applications. In this work we have proposed a novel framework for certification of  a particular set of $d$-outcome quantum measurements, which does not require entanglement or any other spatial quantum correlation. Further, our protocol does not need any prior knowledge or assumption about the dimension of the system on which the measurements are performed.  Importantly, the specific measurements certified in the present study has fundamental significance as well as information theoretic applications. Considering a scenario consisting  of a preparation followed by two measurements in sequence,  we have first proposed a class of inequalities involving temporal correlations, and have then established their sum-of-squares decompositions. Using quantum violations of these inequalities, we have certified   a specific set of $d$-outcome quantum measurements uniquely up to some unitary freedom. Moreover, we have shown that our certification protocol is robust against non-ideal realizations. As a proposed application, our protocol can be used to generate genuine quantum randomness.

It needs to be emphasized that  one cannot certify the measurements uniquely without any assumption whatsoever on the preparation device, employing the quantum violations of our proposed temporal inequalities.   Starting with a preparation device producing a maximally mixed states of dimension $D$, one can certify some particular $d$-outcome quantum measurements of dimension $D$ with $D \geq d$ following our protocol. Further, no information about $D$ is required for realizing this protocol. Therefore, in our certification scheme, preparing $D$ number of mutually orthogonal pure states in $\mathbb{C}^D$ in different experimental runs randomly by the preparation device is sufficient.  In comparison, tomography of such measurements requires $\mathcal{O}(D^2)$ characterized quantum preparations \cite{Gianani20}. Hence,  the requirement on the preparation device in our protocol is less demanding than that in the case of tomography of quantum measurements. Additionally, tomography also requires some prior knowledge about measurement devices, in contrast to those in our scheme that behave essentially as black boxes without memory.

 Unlike certification protocols of $d$-outcome measurements proposed earlier \cite{saha,ghzsarkar}, entanglement between two spatially separated particles is not necessary for the successful realization of our protocol. However, in order to realize our protocol, one must trust that the preparation device produces maximally mixed state of a single particle. Since, for a given dimension, preparing a particular mixed state %\cite{dep1,dep2,dep3,dep4} 
	of a single particle  is easier than preparing two spatially separated entangled particles, our protocol is less demanding to be verified experimentally, and should be more desirable for commercial purposes.

Before concluding, it would be pertinent to mention that the analysis presented in this article for certifying quantum components received from an unknown provider also furnishes a general methodology for introducing new  Leggett-Garg type  temporal inequalities following the structure of the existing Bell inequalities. Adopting this methodology, certification schemes for quantum devices using temporal correlations without requiring entanglement can be designed based on the existing certification protocols in the Bell scenarios involving entanglement.  For example, based on the particular Bell-type inequalities proposed in \cite{ghzsarkar} applicable in the $N-m-d$ scenario (involving $N$ parties, $m$ measurement settings per party, $d$ outcomes per measurement setting) with $N, m, d$ being arbitrary, one can propose Leggett-Garg type temporal inequalities involving $m$ number of $d$-outcome measurements adopting the methodology described here. Further, one can use our method to propose self-testing/certification protocols of $m$ number of $d$-outcome quantum measurements using temporal correlations without using entanglement between spatially separated particles based on the self-testing proof derived in  \cite{ghzsarkar}. Also, the method presented here can be further applied in the context of self-testing proof proposed in \cite{mub} to devise certification scheme of three-outcome mutually unbiased quantum measurements using temporal correlations that may be useful for secure certification of larger amount of randomness.

 To summarize, though the present study proposes certification protocol of some specific quantum measurements with arbitrary number of outcomes, the method presented here is quite general and can be immediately applied in different contexts to certify a wide range of quantum measurements employing temporal quantum correlations. Finally, it is worth noting that the certified $d$-outcome quantum measurements can be rigorously implemented in optical setups \cite{opticalsetup}.  

%Finally, it is worth for future study to propose a certification protocol of arbitrary outcome quantum measurements under less assumption than that required in the present work.}

\subsection*{Acknowledgements} 
DD and DS acknowledge Science and Engineering Research Board (SERB), Government of India for financial support through the National Post Doctoral Fellowship (File Nos.: PDF/2020/001358 and PDF/2020/001682). During the later phase of this work, the research of DD is supported by the Royal Society (United Kingdom) through the Newton International Fellowship (NIF $\backslash R1 \backslash 212007$). ASM acknowledges support from
the project no. DST/ICPS/QuEST/2018/79 of the Department of Science and Technology, Government of India.

\newpage

\onecolumngrid
\appendix
\section{Proof of Lemma \ref{lemma1}}\label{app1}
%In general, the measurement of $A_i$ can be taken as a POVM with $A_i \equiv \{M_i^0, M_i^1, \cdots, M_i^{d-1}\}$, where $M_i^{a_i} \geq 0$ for all $a_i \in \{0, 1, \cdots, d-1\}$; $\sum\limits_{a_i=0}^{d-1} M_i^{a_i} = \mathbbm{1}$. Here $M_i^{a_i}$ are called measurement effects. 

%The Kraus operators of the POVM $A_i$ are given by, $\{ U_i^0 \sqrt{M_i^0},  U_i^1 \sqrt{M_i^1}, \cdots,  U_i^{d-1} \sqrt{M_i^{d-1}} \}$, where $U_i^{a_i}$ are unitary operators for all $i \in \{1,2,3,4\}$ and for all $a_i \in \{0, 1, \cdots, d-1\}$. 
Consider that the following condition holds under Assumptions \ref{ass1}-\ref{ass2},
\begin{equation}
p(a_i, a_i | A_i, A_i) = p(a_i | A_i).
\label{lemma1e1}
\end{equation}
Also, since condition \eqref{conjtprob} holds for all $|\psi\rangle \in \mathbb{C}^D$ (where $D$ is arbitrary), we have 
\begin{align}
   p(a_i, a_i | A_i, A_i) \leq p(a_i | A_i) \hspace{0.3cm} \forall  |\psi\rangle \in \mathbb{C}^D.
   \label{lemma1e2}
\end{align}
Hence, the condition (\ref{lemma1e1}) for $\rho^{(\mathcal{P})} = \mathbbm{1}/D$ together with (\ref{lemma1e2}) implies that 
\begin{equation}
p(a_i, a_i | A_i, A_i) = p(a_i | A_i) \hspace{0.3cm} \forall  |\psi\rangle \in \mathbb{C}^D.
   \label{lemma1e3}
\end{equation}
From the above, it follows that
\begin{align}
    \langle \psi | \left(\sqrt{M_i^{a_i}}\right)^{\dagger} \left(U_i^{a_i}\right)^{\dagger} M_i^{a_i} U_i^{a_i} \sqrt{M_i^{a_i}} | \psi \rangle = \langle \psi | M_i^{a_i} | \psi \rangle \hspace{0.15cm} \forall  |\psi\rangle \in \mathbb{C}^D ,
    \label{lemma1e4}
\end{align}
which implies
\begin{align}
   \left( \sqrt{M_i^{a_i}} \right)^{\dagger} \left(U_i^{a_i}\right)^{\dagger} M_i^{a_i} U_i^{a_i} \sqrt{M_i^{a_i}}  =  M_i^{a_i} .
    \label{lemma1e5}
\end{align}
Let $M_i^{a_i} = \sum_{x=0}^{m-1} \lambda_{u} |\psi_{u} \rangle \langle \psi_{u} |$ with $0 < \lambda_{u} \leq 1$ for all $u \in \{0,  \cdots, m-1\}$; $\{ |\psi_{0}\rangle, \cdots, |\psi_{D-1}\rangle \}$ being an orthonormal basis in $\mathbb{C}^D$ and  $m$ ($1 \leq m \leq D$) is the rank of $M^{a_i}_i$. Putting this in Eq.(\ref{lemma1e5}), we get the following,
%\begin{widetext}
\begin{align}
 \Bigg(\sum_{u=0}^{m-1} \sqrt{\lambda_{u}} |\psi_{u}\rangle \langle \psi_{u} | \Bigg) \left(U_i^{a_i}\right)^{\dagger} \Bigg(\sum_{u=0}^{m-1} \lambda_{u} |\psi_{u} \rangle \langle \psi_{u} | \Bigg) U_i^{a_i} \Bigg(\sum_{u=0}^{m-1} \sqrt{\lambda_{u}} |\psi_{u} \rangle \langle \psi_{u} | \Bigg)  =  \sum_{u=0}^{m-1} \lambda_{u} |\psi_{u} \rangle \langle \psi_{u} |.
 \label{lemma1e6}
\end{align}
Let $(U_i^{a_i})^{\dagger} |\psi_{u} \rangle = |\phi_{u} \rangle$ for all $u \in \{0, \cdots, D-1\}$, where $\{ |\phi_{0}\rangle, \cdots, |\phi_{D-1}\rangle \}$ is another orthonormal basis in $\mathbb{C}^D$. Using this, we get from Eq.(\ref{lemma1e6}),
\begin{align}
   \Bigg(\sum_{u=0}^{m-1} \sqrt{\lambda_{u}} |\psi_{u}\rangle \langle \psi_{u} | \Bigg)  \Bigg(\sum_{u=0}^{m-1} \lambda_{u} |\phi_{u} \rangle \langle \phi_{u} | \Bigg)  \Bigg(\sum_{u=0}^{m-1} \sqrt{\lambda_{u}} |\psi_{u} \rangle \langle \psi_{u} | \Bigg)  =  \sum_{u=0}^{m-1} \lambda_{u} |\psi_{u} \rangle \langle \psi_{u} |.
 \label{lemma1e7}
\end{align}
Let $|\psi_{k} \rangle$ belongs to the orthonormal basis $\{ |\psi_{0}\rangle, \cdots, |\psi_{D-1}\rangle \}$ and $k \in \{0, \cdots, m-1\}$. With this, we get the following from  Eq.(\ref{lemma1e7}),
\begin{align}
 &\langle \psi_{k} | \Bigg[  \Bigg(\sum_{u=0}^{m-1} \sqrt{\lambda_{u}} |\psi_{u}\rangle \langle \psi_{u} | \Bigg)  \Bigg(\sum_{u=0}^{m-1} \lambda_{u} |\phi_{u} \rangle \langle \phi_{u} | \Bigg)  \Bigg(\sum_{u=0}^{m-1} \sqrt{\lambda_{u}} |\psi_{u} \rangle \langle \psi_{u} | \Bigg) \Bigg] |\psi_{k} \rangle  =  \langle \psi_{k} |  \Big(\sum_{u=0}^{m-1} \lambda_{u} |\psi_{u} \rangle \langle \psi_{u} | \Big) |\psi_{k} \rangle .
 \label{lemma1e8}
\end{align}
%\end{widetext}
After simplifying, we get the following condition from Eq.(\ref{lemma1e8}),
\begin{align}
    \sum_{u=0}^{m-1} \lambda_{u} \Big| \langle \phi_{u} | \psi_{k} \rangle \Big|^2 = 1.
    \label{lemma1e9}
\end{align}
Since $\{ |\phi_{0}\rangle, \cdots, |\phi_{D-1}\rangle \}$ is an orthonormal basis in $\mathbb{C}^D$, we have 
\begin{align}
    \sum_{u=0}^{D-1}  \Big| \langle \phi_{u} | \psi_{k} \rangle \Big|^2  = 1.
    \label{lemma1e10}
\end{align}
Now, subtracting Eq.(\ref{lemma1e9}) from Eq.(\ref{lemma1e10}), we get the following,
\begin{align}
    \sum_{u=0}^{m-1} (1-\lambda_{u}) \Big| \langle \phi_{u} | \psi_{k} \rangle \Big|^2 + \sum_{u=m}^{D-1} \Big| \langle \phi_{u} | \psi_{k} \rangle \Big|^2 = 0.
    \label{lemma1e11}
\end{align}
The left hand side of the above equation is the sum of positive terms. This sum is zero if and only if each  term is zero. Hence, we have
\begin{align}
\langle \phi_{u} | \psi_{k} \rangle = 0 \hspace{0.3cm} \forall u \in \{m,  \cdots, D-1\},
\label{lemma1e12}
\end{align}
where the above holds for all $k \in \{0, \cdots, m-1\}$.

On the other hand, it can be shown that for each $u \in \{0, \cdots, m-1\}$, there exists at least one $| \psi_{k} \rangle$ with $k \in \{0, \cdots, m-1\}$, such that $\Big| \langle \phi_{u} | \psi_{k} \rangle \Big|^2 \neq  0$. The negation of this leads to a contradiction as follows. Suppose, there exists one particular $u \in \{0,  \cdots, m-1\}$, denoted by $\widetilde{u}$, such that $\Big| \langle \phi_{\widetilde{u}} | \psi_{k} \rangle \Big|^2 =  0$ for all $k \in \{0, \cdots, m-1\}$. Hence, it is implied that $|\phi_{\widetilde{u}} \rangle$ is mutually orthogonal to $| \psi_{k} \rangle$ for all $k \in \{0,  \cdots, m-1\}$. 
On the other hand, by definition, $|\phi_{\widetilde{u}} \rangle$ is mutually orthogonal to $|\phi_{u} \rangle$ for all $u \in \{m,  \cdots, D-1\}$.
Since, Eq.(\ref{lemma1e12}) holds for all $k \in \{0, \cdots, m-1\}$, we can construct the following set: $\left\{ | \psi_{0} \rangle, \cdots, | \psi_{m-1} \rangle, |\phi_{\widetilde{u}} \rangle, |\phi_{m} \rangle, \cdots, |\phi_{D-1} \rangle \right\}$ consisting of $(D+1)$ number of mutually orthogonal vectors. However, in a Hilbert space of dimension $D$, one cannot have more than $D$ number of mutually orthogonal vectors. 
Therefore, for each $u \in \{0, 1, \cdots, m-1\}$, there exists at least one $| \psi_{k} \rangle$ with $k \in \{0,  \cdots, m-1\}$, such that $\Big| \langle \phi_{u} | \psi_{k} \rangle \Big|^2 \neq  0$. 

Since Eq.(\ref{lemma1e11}) can be derived for all values of $k \in \{0, \cdots, m-1\}$, we know that there exists at least one $k$ for each $u \in \{0, \cdots, m-1\}$ such that
\begin{align}
     (1-\lambda_{u}) \Big| \langle \phi_{u}| \psi_{k} \rangle \Big|^2  = 0 \hspace{0.5cm} \text{and} \hspace{0.5cm} \Big| \langle \phi_{u}| \psi_{k} \rangle \Big|^2 \neq  0. 
    \label{lemma1e13}
\end{align}
In other words, we have the following,
\begin{align}
    \lambda_{u} = 1 \hspace{0.3cm} \forall u \in \{0, \cdots, m-1\}.
\end{align}
Hence, we have $M_i^{a_i} =  \sum_{u=0}^{m-1}  |\psi_{u} \rangle \langle \psi_{u} |$. That is, $M_i^{a_i}$ is a projector.  Now, if condition (\ref{lemma1e1}) holds for all $a_i \in \{0,  \cdots, d-1\}$, then each of the POVM elements $\{M_i^{a_i}\}$ is projector, i.e., $\left(M_i^{a_i}\right)^2 = M_i^{a_i}$ for all $a_i \in \{0,  \cdots, d-1\}$. Now, it can easily be shown that the  projectors $\{M_i^{a_i}\}$ are mutually orthogonal. 
Multiplying $M_i^{\widetilde{a}_i}$ (with $\widetilde{a}_i \in \{0, \cdots, d-1\}$) on the both sides of the normalization condition $\sum_{a_i=0}^{d-1} M_i^{a_i} = \mathbbm{1}$, we get $\left(M_i^{\widetilde{a}_i}\right)^2 + \sum_{\substack{a_i=0 \\ a_i \neq \widetilde{a}_i}}^{d-1} M_i^{a_i} M_i^{\widetilde{a}_i}  = M_i^{\widetilde{a}_i}$. Since, as mentioned earlier, $\left(M_i^{\widetilde{a}_i}\right)^2 = M_i^{\widetilde{a}_i}$, we have the following, 
\begin{equation}
  \sum_{\substack{a_i=0 \\ a_i \neq \widetilde{a}_i}}^{d-1} M_i^{a_i} M_i^{\widetilde{a}_i} = 0.
  \label{orthoapp}
\end{equation}
The left hand side of this condition (\ref{orthoapp}) is a sum of products of two projectors. Now, product of two projectors is a positive operator \cite{product}.  Hence, the left hand side of  (\ref{orthoapp}) is a sum of positive operators. Therefore, we have that $M_i^{a_i} M_i^{\widetilde{a}_i} = 0$ for all $a_i \neq  \widetilde{a}_i \in \{0, \cdots, d-1\}$.

Now, note that the condition (\ref{lemma1e12}) is satisfied for all $k \in \{0, \cdots, m-1\}$ by the two orthonormal basis $\{ |\psi_{0}\rangle, \cdots, |\psi_{D-1}\rangle \}$ and $\{ |\phi_{0}\rangle, \cdots, |\phi_{D-1}\rangle \}$ in $\mathbb{C}^D$. Hence, it is implied from  (\ref{lemma1e12}) that the set $\left\{ | \psi_{0} \rangle, \cdots, | \psi_{m-1} \rangle,  |\phi_{m} \rangle, \cdots, |\phi_{D-1} \rangle \right\}$ is another orthonormal basis in $\mathbb{C}^D$. Therefore,  the $m$-dimensional subspace spanned by the vectors $\{ |\psi_{0}\rangle, \cdots, |\psi_{m-1} \rangle \}$ and the $m$-dimensional subspace  spanned by the vectors $\{ |\phi_{0}\rangle, \cdots, |\phi_{m-1} \rangle \}$ are the same. It is thus implied that 
\begin{align}
    \sum_{u=0}^{m-1}  |\psi_{u} \rangle \langle \psi_{u} | = \sum_{u=0}^{m-1}  |\phi_{u} \rangle \langle \phi_{u} | = \widetilde{\mathbbm{1}},
    \label{post2appnew}
\end{align}
where $\widetilde{\mathbbm{1}}$ is the identity operator acting on the $m$-dimensional subspace spanned by the vectors $\{ |\psi_{0}\rangle, \cdots, |\psi_{m-1} \rangle \}$. Since the above analysis holds for all possible choices of $U_i^{a_i}$, we have \eqref{UMU}.

\section{Analysis of the temporal inequality \eqref{temp} and its classical bound}\label{app2}

Consider the following quantity $\widetilde{\tau}_d$ which is a function of several probability distributions as introduced in \cite{SATWAP},
\begin{align}
    \widetilde{\tau}_d := \sum_{k=0}^{\floor*{d/2} -1 } \big[\alpha_k (\mathbb{P}^1_k + \mathbb{P}^2_k) - \beta_k (\mathbb{Q}^1_k +\mathbb{Q}^2_k) \big],
\end{align}
where the expressions $\mathbb{P}^1_k$, $\mathbb{P}^2_k$, $\mathbb{Q}^1_k$ and $\mathbb{Q}^2_k$ are defined as
\begin{equation}
   \mathbb{P}^1_k = p(A_1 = A_3 +k) + p(A_2 = A_3 - k) + p(A_2 = A_4 +k) + p(A_1=A_4-k-1),
\end{equation}
\begin{equation}
    \mathbb{Q}^1_k = p(A_1 = A_3 - k - 1 ) + p(A_2 = A_3 + k + 1) + p(A_2 = A_4 -k-1) + p(A_1 = A_4 + k),
\end{equation}
\begin{equation}
   \mathbb{P}^2_k = p(A_3 = A_1 +k) + p(A_3 = A_2 - k) + p(A_4 = A_2 +k) + p(A_4=A_1-k-1),
\end{equation}
and 
\begin{equation}
    \mathbb{Q}^2_k = p(A_3 = A_1 - k - 1 ) + p(A_3 = A_2 + k + 1) + p(A_4 = A_2 -k-1) + p(A_4 = A_1 + k),
\end{equation}
with
\begin{align}
    \alpha_k = \frac{1}{2d}[g(k) + (-1)^d \tan \left(\frac{\pi}{4d}\right)],~~~~~ \beta_k = \frac{1}{2d}[g(k+1/2) - (-1)^d \tan \left(\frac{\pi}{4d}\right)].
    \label{alphabeta}
\end{align}
Here, $g(k)=\cot [\pi(k+1/4)/d]$. Also, $p(A_i=A_j + k) := \sum_{m=0}^{d-1} p(a_i = m + k \, \, \text{mod} \, \, d, a_j = m|A_i, A_j)$, where  $p(a_i = m + k \, \, \text{mod} \, \,  d, a_j = m|A_i, A_j)$ denotes the joint probability of getting the outcome $a_i = (m + k \, \, \text{mod} \, \, d)$ when the measurement of $A_i$ is performed on the initially prepared state $\rho^{(\mathcal{P})}$ and the outcome $a_j=m$ when the measurement of $A_j$ is performed on the post measurement state of $A_i$. Following this notation, $\mathbb{P}^1_k$ and $\mathbb{Q}^1_k$ involve the probability distributions for the experimental runs in which $A_1$ or $A_2$ is measured at first on the initial preparation $\rho^{(\mathcal{P})}$ and then $A_3$ or $A_4$ is  measured on the post measurement state. Similarly, $\mathbb{P}^2_k$ and $\mathbb{Q}^2_k$ contain the probability distributions for the experimental runs in which $A_3$ or $A_4$ is measured at first on the initial preparation $\rho^{(\mathcal{P})}$ and then $A_1$ or $A_2$ is  measured on the post measurement state.

Now, following the calculations mentioned in the supplementary material of the Ref. \cite{SATWAP}, it can be shown that 
\begin{equation}
    \tau_d = d \, \widetilde{\tau}_d - 8S
    \label{real}
\end{equation}
with
\begin{equation}
    S= \frac{1}{2} \Bigg\lbrace 1 -  \cot \left[\frac{\pi}{d} \left( \floor*{\frac{d}{2}} + \frac{1}{4}\right)\right] \Bigg\rbrace .
    \label{sequation}
\end{equation}
Here, $\tau_d$ is the left hand sides of the temporal inequalities (\ref{temp}), i.e., 
\begin{align}
\tau_d = \sum_{k=1}^{d-1} &\Bigg[ a_k \left\langle A_1^{(k)} \, A_3^{(d-k)} \right\rangle + a_k^{*} \omega^k \left\langle A_1^{(k)}  \, A_4^{(d-k)} \right\rangle + a_k^{*} \left\langle A_2^{(k)} \, A_3^{(d-k)} \right\rangle  + a_k \left\langle A_2^{(k)} \, A_4^{(d-k)} \right\rangle  \nonumber \\
& + 
a_k \left\langle  A_3^{(d-k)} \, A_1^{(k)} \right\rangle + a_k^{*} \omega^k \left\langle A_4^{(d-k)} \, A_1^{(k)}  \right\rangle + a_k^{*} \left\langle A_3^{(d-k)} \,  A_2^{(k)} \right\rangle + a_k \left\langle A_4^{(d-k)} \, A_2^{(k)}  \right\rangle \Bigg]. \nonumber 
\end{align} 

One can see that $\widetilde{\tau}_d$  is actually a linear function of probability distributions $p(a_i,a_j|A_i,A_j)$ with real coefficients. Hence,  Eq.(\ref{real}) implies that the left hand side of the temporal inequality (\ref{temp}) always takes real values.\\

Let us now concentrate on the classical bound of the inequality \eqref{temp} proposed in the main text. Note that $\widetilde{\tau}_d$ can be expressed in the following alternative form,
\begin{align}
\widetilde{\tau}_d = \sum_{k=0}^{d-1} \alpha_k \Big[ &p(A_1 = A_3 +k) + p(A_2 = A_4 +k)    +  p(A_3 = A_1 +k) + p(A_4 = A_2 +k)  \nonumber \\
&+ p(A_2 = A_3 - k ) +   p(A_1 = A_4 - k -1) + p(A_3 = A_2 - k ) +   p(A_4 = A_1 - k - 1) \Big].
\label{taualt}
\end{align}
This can be achieved using the relation that $\alpha_k = - \beta_{d-k-1}$. Hence, the terms of the sum which are attached with  $\beta_k$ can be shifted to indices $k = \floor*{d/2}, \cdots, d - 1$ and are now associated with an $\alpha_k$ \cite{SATWAP}. For the cases where $d$ is odd,  the term
$k = \floor*{d/2}$ disappears,  as $\alpha_{\floor*{d/2}}=0$.

Next, we  use the notion of ``macrorealism" \cite{lgi} which is the conjunction of the following two assumptions, in order to derive the classical bound of the temporal inequality (\ref{temp}): \textit{(i) Realism:} At any instant, irrespective of any measurement, a system is definitely in any one of the available states such that all its observable properties have definite values. \textit{(ii) Noninvasive measurability:} It is possible, in principle, to determine which of the states the system is in, without affecting the state itself or the system's subsequent evolution. This notion of ``macrorealism" is one of the central concepts underpinning the classical world view.

The conjunction of the assumptions `Realism' and `Noninvasive measurability' implies that the probability of getting the outcomes $a_i$ and $a_j$, when the measurements of $A_i$ and $A_j$, respectively, are performed, does not depend on the order of the two measurements. Mathematically, it implies that $p(a_i,a_j|A_i,A_j) = p(a_j,a_i|A_j,A_i)$. Using this and from the definition of $p(A_i=A_j + k)$, we have
\begin{align}
    p(A_j=A_i + k) &= p(A_i=A_j - k)  \nonumber \\
    &= p(A_i=A_j + d - k) \, \, \, \, \, \, \forall i, j \in \{1,2,3,4 \} \, \, \text{and} \, \, \forall k \in \{0, 1, \cdots, d-1\}. \nonumber
\end{align}
Hence, we have
\begin{align}
    \sum_{k=0}^{d-1} \Big[p(A_i=A_j + k) + p(A_j=A_i + k) \Big] = \sum_{k=0}^{d-1} 2 p(A_i=A_j + k)  \, \, \, \forall i, j \in \{1,2,3,4 \}.
    \label{moddsum}
\end{align}
Using (\ref{moddsum}) and (\ref{taualt}), the expression of $\widetilde{\tau}_d$ for classical systems (denoted by $\widetilde{\tau}_{d_C}$) becomes 
\begin{align}
\widetilde{\tau}_{d_C} = \sum_{k=0}^{d-1} 2 \alpha_k \Big[ &p(A_1 = A_3 +k) + p(A_2 = A_4 +k) + p(A_2 = A_3 - k ) +   p(A_1 = A_4 - k - 1)  \Big].
\label{taualt2}
\end{align}

%Next, we will use the notion of ``macrorealism" \cite{lgi} which is the conjunction of the following two assumptions in order to derive the classical bound of the temporal inequality (\ref{temp}): \textit{Realism:} At any instant, irrespective of any measurement, a system is definitely in any one of the available states such that all its observable properties have definite values. \textit{Noninvasive measurability:} It is possible, in principle, to determine which of the states the system is in, without affecting the state itself or the system's subsequent evolution. This notion of ``macrorealism" is one of the central concepts underpinning the classical world view. 

Now, `Realism' implies that we can assign definite value to each of the  observables which is revealed as the outcome of the measurement. Let the assigned value of $A_i$ be denoted by $v_i$ for all $i\in \{1,2,3,4\}$. Also, `Noninvasive measurability' implies that $v_i$ remains unaffected whether or not any measurement is performed prior to the measurement of $A_i$. Next, let us assign
one value $q$ such that $p(A_i = A_j + k) = \delta_{k,q}$, where $\delta_{k,q}$ is the Kronecker delta function. Here $q$ depends on $A_i$ and $A_j$ but not all pairs of $A_i$ and $A_j$ appearing 
in $\widetilde{\tau}_{d_C}$. In this way, we can define four variables $q_i$ $\in$ $\{0, 1 . . . , d - 1\}$ satisfying the following conditions \cite{SATWAP},
\begin{align}
    v_1 - v_3 &= q_1, \nonumber \\
    v_3 - v_2 &= q_2, \nonumber \\
    v_2 - v_4 &= q_3, \nonumber \\
    v_4 - v_1 &= q_4 + 1. 
\end{align}
Due to the chained character of these equations, we have 
\begin{equation}
 q_4 = -1 - \sum_{i=1}^{3} q_i  ~\text{mod}~ d. \nonumber 
\end{equation}

With these, $\widetilde{\tau}_{d_C}$ becomes
\begin{align}
    \widetilde{\tau}_{d_C} = 2 \left( \sum_{i=1}^{3} \alpha_{q_i} + \alpha_{-1- \sum_{i=1}^{3} q_i ~\text{mod}~ d} \right),
\end{align}
where $\alpha_k$ is defined in Eq.(\ref{alphabeta}). Therefore, the classical bound of $\widetilde{\tau}_d$ is given by,
\begin{equation}
   \widetilde{C}_d = 2 \max_{0\leq q_1, q_2, q_3 \leq d-1} \left( \sum_{i=1}^{3} \alpha_{q_i} + \alpha_{-1- \sum_{i=1}^{3} q_i ~\text{mod}~ d} \right).
\end{equation}
Next, using Theorem 1 in the supplementary material of the Ref. \cite{SATWAP}, 
we have
\begin{equation}
     \max_{0\leq q_1, q_2, q_3 \leq d-1} \left( \sum_{i=1}^{3} \alpha_{q_i} + \alpha_{-1- \sum_{i=1}^{3} q_i ~\text{mod}~ d} \right) = 3 \alpha_0 + \alpha_{d-1}
     \label{maxclass}
\end{equation}
Hence, using Eqs.(\ref{alphabeta}), (\ref{real}) and (\ref{sequation}), the classical bounds $C_d$ of the temporal inequalities (\ref{temp}) are obtained as follows
\begin{align}
     C_d &= d  \widetilde{C}_d - 8S \nonumber \\
    & = 2 d (3 \alpha_0 + \alpha_{d-1}) - 8S \nonumber \\
    & =  3 \cot \left(\frac{\pi}{4d} \right) - \cot \left( \frac{3\pi}{4d} \right) -4.
\end{align}

\section{Proof of Lemma \ref{lemma2}}\label{app3}

Here we present an example in the case of $d=2$ to show that if we are not restricted to any assumption on the state preparation, then  there exists two different projective measurements such that those two measurements are not connected unitarily although $\tau_d = 4(d-1)$ is achieved.

For the first strategy, the prepared state is of the form 
\[
\begin{pmatrix}
\cos\theta \\
e^{\mathbbm{i}\phi}\sin\theta \\
0\\
\end{pmatrix}
\]
for any $\theta, \phi$, and the measurements are of the form $A_i=A_i^+-A_i^-$ such that $A^+_i= | u_i \rangle\langle u_i|$ and $A_i^- = \mathbbm{1} - |u_i \rangle\langle u_i|$ wherein
\[ |u_1\rangle = \begin{pmatrix}
1\\
0 \\
0\\
\end{pmatrix},~~
|u_2\rangle = \begin{pmatrix}
\cos{\pi/4}\\
\sin{\pi/4}\\
0\\
\end{pmatrix},~~
|u_3\rangle = \begin{pmatrix}
\cos{\pi/8}\\
\sin{\pi/8}\\
0\\
\end{pmatrix},~~
|u_4\rangle = \begin{pmatrix}
~\cos{\pi/8}\\
-\sin{\pi/8}\\
0\\
\end{pmatrix}.
\] 
For the second strategy, the prepared state is 
\[
\begin{pmatrix}
0.427 \\
-0.512 - \mathbbm{i} 0.548\\
~0.067 + \mathbbm{i} 0.747\\
\end{pmatrix}.
\]
and the measurements are of the form $A_i=A_i^+-A_i^-$ such that $A^+_i= | v_i \rangle\langle v_i|$ and $A_i^- = \mathbbm{1} - |v_i \rangle\langle v_i|$ wherein
\[ |v_1\rangle = \begin{pmatrix}
1\\
0 \\
0\\
\end{pmatrix},~~
|v_2\rangle = \begin{pmatrix}
0.582 \\
-0.275 + \mathbbm{i} 0.308\\
-0.264 + \mathbbm{i} 0.317\\
\end{pmatrix},~~
|v_3\rangle = \begin{pmatrix}
\cos{\pi/4}\\
1/2~e^{\mathbbm{i} \pi/4} \\
1/2~e^{\mathbbm{i} \pi/4} \\
\end{pmatrix},~~
|v_4\rangle = \begin{pmatrix}
0.910 \\
-0.135 - \mathbbm{i} 0.384\\
-0.104 -\mathbbm{i} 0.393\\
\end{pmatrix}.
\] 

We can verify that $\tau_2 = 4$ is achieved by both  the strategies, where $\tau_2$ is defined in  \eqref{tid2}. 
If there exists a unitary, say $U$, that transforms one set of measurements to the another, then 
the following relations must hold
\be 
U |u_i\rangle = |v_i\rangle, \quad \forall i =1,2,3,4. 
\ee 
However, the fact that $|\langle u_1|u_3\rangle | \neq |\langle v_1|v_3\rangle |$ implies that such a unitary does not exist.  Note that the above example is in the three-dimensional binary outcome  projective measurement scenario. Other
such examples can easily be constructed in higher dimensions.

\section{Proof of Theorem \ref{thm2n}}\label{app4n}

Suppose the non-ideal observables $\widetilde{A}_1$, $\widetilde{A}_2$, $\widetilde{A}_3$, $\widetilde{A}_4$ satisfy the following,
\begin{align}
p(a_i | \widetilde{A}_i) -	p(a_i, a_i | \widetilde{A}_i, \widetilde{A}_i)   &= \eta_i^{(a_i)} \hspace{0.2cm} \text{with} \hspace{0.2cm} \eta_i^{(a_i)}>0  \hspace{0.4cm} \forall \, \, a_i \in \{0, \cdots, d-1\}, \, \, \forall \, \, i \in \{1, 2, 3, 4\}.
	\label{lemma1er}
\end{align}

Since condition \eqref{conjtprob} holds for all $|\psi\rangle \in \mathbb{C}^D$ (where $D$ is arbitrary), we have 
\begin{align}
p(a_i | \widetilde{A}_i) -	p(a_i, a_i | \widetilde{A}_i, \widetilde{A}_i) \geq 0 \hspace{0.3cm} \forall \, \,  |\psi\rangle \in \mathbb{C}^D.
	\label{lemma1e2r}
\end{align}
Hence, the condition (\ref{lemma1er}) together with (\ref{lemma1e2r})  under Assumptions \ref{ass1}-\ref{ass2}  implies that 
\begin{align}
0 \leq \langle \psi | \widetilde{K}_i^{a_i^{\dagger}}  \widetilde{K}_i^{a_i}   -	 \widetilde{K}_i^{a_i^{\dagger}} \widetilde{K}_i^{a_i^{\dagger}} \widetilde{K}_i^{a_i}  \widetilde{K}_i^{a_i}  | \psi \rangle < D \eta_i^{(a_i)} \hspace{0.3cm} \forall \, \,  |\psi\rangle \in \mathbb{C}^D,  \, \, \forall \, \, a_i \in \{0,  \cdots, d-1\}, \, \, \forall \, \, i \in \{1, 2, 3, 4\}.
\label{lemma1r3bb}
\end{align}
Hence, condition (\ref{lemma1r3bb}) implies the following,
\begin{align}
 0 \leq  \widetilde{K}_i^{a_i^{\dagger}}  \widetilde{K}_i^{a_i}   -	 \widetilde{K}_i^{a_i^{\dagger}} \widetilde{K}_i^{a_i^{\dagger}} \widetilde{K}_i^{a_i}  \widetilde{K}_i^{a_i}  < D \eta_i^{(a_i)} \mathbbm{1}  \hspace{0.3cm} \forall \, \, a_i \in \{0,  \cdots, d-1\}, \, \, \forall \, \, i \in \{1, 2, 3, 4\}.
	\label{lemma1r3}
\end{align}

Next, let us define the following positive Hermitian operator acting on $ \mathbb{C}^D$ for all $a_i \in \{0,  \cdots, d-1\}$ and for all $i \in \{1, 2, 3, 4\}$,
\begin{align}
	W^{a_i}_i = \widetilde{K}_i^{a_i^{\dagger}}  \widetilde{K}_i^{a_i}   -	 \widetilde{K}_i^{a_i^{\dagger}} \widetilde{K}_i^{a_i^{\dagger}} \widetilde{K}_i^{a_i}  \widetilde{K}_i^{a_i}.
	\label{lero4}
\end{align}
Also, consider that $\lambda_{1_i}^{a_i}$,  $\cdots$, $\lambda_{D-1_i}^{a_i}$ $\in \mathbb{R}$ are the eigenvalues of $W^{a_i}_i$. Hence, from (\ref{lemma1r3}), we can write
\begin{align}
0 \leq \lambda_{x_i}^{a_i} < D  \eta_i^{(a_i)} \hspace{0.3cm} \forall \, \, x \in \{0, \cdots, D-1\}, \, \,  \forall \, \, a_i \in \{0,  \cdots, d-1\}, \, \, \forall \, \, i \in \{1, 2, 3, 4\}.
\label{lero5}
\end{align}

Now, from the definition of Hilbert Schmidt norm, we have for all $a_i \in \{0,  \cdots, d-1\}$ and for all $i \in \{1, 2, 3, 4\}$
\begin{align}
\Big| \Big|    W^{a_i}_i  \Big| \Big|_{\text{HS}} = \sqrt{\sum_{x=0}^{D-1} \left( \lambda_{x_i}^{a_i} \right)^2} %\hspace{0.5cm} \text{and} \hspace{0.5cm} \Big| \Big|   \eta_i^{(a_i)} \mathbbm{1} \Big|  \Big|_{\text{HS}} = \eta_i^{(a_i)} \sqrt{D}.
\label{lero6}
\end{align}

Now, using (\ref{lero4}-\ref{lero6}), we can write
\begin{align}
\Big| \Big|    \widetilde{K}_i^{a_i^{\dagger}}  \widetilde{K}_i^{a_i}   -	 \widetilde{K}_i^{a_i^{\dagger}} \widetilde{K}_i^{a_i^{\dagger}} \widetilde{K}_i^{a_i}  \widetilde{K}_i^{a_i}  \Big| \Big|_{\text{HS}} < \eta_i^{(a_i)} D^{\frac{3}{2}} \hspace{0.3cm} \forall \, \, a_i \in \{0,  \cdots, d-1\}, \, \, \forall \, \, i \in \{1, 2, 3, 4\}.
\label{lero7}
\end{align}

 Next, for the ideal  observables $A_i$ (with $i \in \{1,2,3,4\}$) satisfying (\ref{lemma1e}), we have that $K_i^{a_i^{\dagger}}  K_i^{a_i}  = \Big( K_i^{a_i^{\dagger}} \Big)^2 \Big( K_i^{a_i}  \Big)^2$ for all $a_i \in \{0, \cdots, d-1\}$. This follows from the fact that each of the ideal measurements satisfies the condition (\ref{lemma1e4}) mentioned in Appendix \ref{app1} for all $a_i \in \{0, \cdots, d-1\}$. Incorporating this relation into Eq.(\ref{lero7}), we have 
\begin{align}
	&\left| \left|  \rho^{(\mathcal{P})} \left\{ \widetilde{K}_i^{a_i^{\dagger}}  \widetilde{K}_i^{a_i}   -	 \Big( \widetilde{K}_i^{a_i^{\dagger}} \Big)^2 \Big( \widetilde{K}_i^{a_i}  \Big)^2  \right\} \rho^{(\mathcal{P})} -  \rho^{(\mathcal{P})} \left\{ K_i^{a_i^{\dagger}}  K_i^{a_i}   -	 \Big( K_i^{a_i^{\dagger}} \Big)^2 \Big( K_i^{a_i}  \Big)^2  \right\} \rho^{(\mathcal{P})} \right| \right|_{\text{HS}} <  \frac{\eta_i^{(a_i)}}{\sqrt{D}} < \eta_i^{(a_i)} \nonumber \\
	& \hspace{8.3cm} \forall \, \, a_i \in \{0,  \cdots, d-1\}, \, \, \forall \, \, i \in \{1, 2, 3, 4\}.
	\label{lero8}
\end{align}

\section{Proof of Theorem \ref{thm2}}\label{app4}

We first take  the condition (\ref{lemma1e}) to be satisfied by the each of the four non-ideal  observables $\widetilde{A}_1$, $\widetilde{A}_2$, $\widetilde{A}_3$, $\widetilde{A}_4$  under Assumptions \ref{ass1}-\ref{ass2}.  Hence, the measurement effects of each of these observables are mutually orthogonal projectors, which follows from Lemma \ref{lemma1}. Hence,  $\widetilde{A}_x^{(k)}$  is  unitary operator and $\widetilde{A}_x^{(k)} = \widetilde{A}_x^k$ for all $x \in \{1,2,3,4\}$ and for all $k \in \{0, \cdots, d-1\}$. Also it implies that the condition (\ref{UMU}) holds true for all the four non-ideal observables.  With these non-ideal unitary observables, we have  $\text{Tr}[\hat{\beta}_{\tau_d}\rho^{(\mathcal{P})}]= 4(d-1)-\epsilon$ (with $\epsilon > 0$), where $\hat{\beta}_{\tau_d}$ is the operator given by Eq.(\ref{temp3}) with $\widetilde{A}_1$, $\widetilde{A}_2$, $\widetilde{A}_3$, $\widetilde{A}_4$. Similar to \eqref{defBi}, we define
\begin{align}
&\widetilde{B}_1^{(k)} = a_k \widetilde{A}_3^{-k} + a_k^{*} \omega^k \widetilde{A}_4^{-k}, \nonumber \\
&\widetilde{B}_2^{(k)} = a_k^{*} \widetilde{A}_3^{-k} + a_k  \widetilde{A}_4^{-k} .
\end{align}
Let us also define the following,
\begin{equation}
\widetilde{P}_x^{(k)} = \mathbbm{1} - \widetilde{A}_x^k \, \widetilde{B}_x^{(k)} \hspace{0.3cm} \forall x \in \{1, 2\}, \, \, \forall k \in \{1, \cdots, d-1\}. 
\label{pxkrob}
\end{equation}
Now, following the calculation similar to that in Sec. \ref{sossection}, we get Eq.(\ref{sos5}) involving $\widetilde{P}_x^{(k)}$ and the operator $\hat{\beta}_{\tau_d}$ with $\widetilde{A}_1$, $\widetilde{A}_2$, $\widetilde{A}_3$, $\widetilde{A}_4$. From this, we have
\begin{align}
\text{Tr} &\Big[ \rho^{(\mathcal{P})} \sum_{k=1}^{d-1} \sum_{x=1}^{2} \Big[ \Big( \widetilde{P}_x^{(k)}\Big)^{\dagger} \, \Big(\widetilde{P}_x^{(k)}\Big) \Big] \Big] = \text{Tr} \Big[\rho^{(\mathcal{P})} ( 4 (d-1) \, \mathbbm{1} - \hat{\beta}_{\tau_d} )\Big]. \nonumber
\end{align}
Since $\rho^{(\mathcal{P})} = \mathbbm{1}/D$, we get,
\begin{align}
&\text{Tr} \Big[ \sum_{k=1}^{d-1} \sum_{x=1}^{2} \Big[ \Big( \widetilde{P}_x^{(k)}\Big)^{\dagger} \, \Big(\widetilde{P}_x^{(k)}\Big) \Big] \Big]= D \epsilon.
\end{align}
Being the sum of positive numbers, we get for the individual term,
\begin{align}
&\text{Tr} \Big[ \Big(\widetilde{P}_x^{(k)}\Big)^{\dagger} \, \Big(\widetilde{P}_x^{(k)}\Big) \Big]= f_{k,x}(\epsilon) \leq D \epsilon ~~~ \forall ~~ k,x .
\label{eqrob}
\end{align}
where $\sum_{k,x} f_{k,x}(\epsilon ) = D \epsilon$ and  $f_{k,x} (\epsilon) \geq 0$ for all $k,x$. Now, from
Eq.(\ref{eqrob}), we get
\begin{align}
\Big| \Big|    \widetilde{P}_x^{(k)}  \Big| \Big|_{\text{HS}} \leq   \sqrt{D \epsilon} ~~~ \forall ~~ k,x .
\end{align}
Hence, from Eq.(\ref{pxkrob}), we can write that
\begin{align}
\Big| \Big|   \mathbbm{1} - \widetilde{A}_x^k \, \widetilde{B}_x^{(k)}  \Big| \Big|_{\text{HS}} \leq    \sqrt{D \epsilon} ~~~ \forall ~~ k,x .
\label{condrob33n}
\end{align}
Now, for the ideal measurements, Eq.(\ref{p5n}) implies that $A_x^k \, B_x^{(k)} = \mathbbm{1}$ for all $k,x$. Thus, from the above relation we get
\be \label{robustnessiinmn}
 \Big| \Big|  A_x^k \, B_x^{(k)} -  \widetilde{A}_x^k \, \widetilde{B}_x^{(k)}  \Big| \Big|_{\text{HS}} \leq \sqrt{D \epsilon} \ \forall k,x .
 \ee 
Since $\rho^{(\mathcal{P})} = \mathbbm{1}/D$, we get the following from Eq.(\ref{robustnessiinmn}),
\begin{align} 
 \Big| \Big| \left( A_x^k \, B_x^{(k)} \right) \rho^{(\mathcal{P})} -  \left(\widetilde{A}_x^k \, \widetilde{B}_x^{(k)} \right) \rho^{(\mathcal{P})}  \Big| \Big|_{\text{HS}} &\leq \sqrt{\frac{\epsilon}{D}} \nonumber \\
 & < \sqrt{\epsilon} \, \, \, \, \forall k,x .
 \label{robustnessii}
 \end{align}
Putting $x=k=1$,  in Eq.(\ref{robustnessii}), we get Eqs.(\ref{robustness1}), and putting  $x=2,k=1$ in Eq.(\ref{robustnessii}), we get (\ref{robustness2}). These constitute a set of robustness arguments of our certification scheme.

We can derive another set of robustness arguments of our certification scheme as described below. Note that Eq.(\ref{eqrob}) also implies the following,
\begin{align}
\Big| \Big|    \Big(\widetilde{P}_x^{(k)} \Big)^{\dagger}  \Big| \Big|_{\text{HS}} &= \Big| \Big|   \mathbbm{1} -   \Big(\widetilde{B}_x^{(k)} \Big)^{\dagger} \, \Big(\widetilde{A}_x^k\Big)^{\dagger} \Big| \Big|_{\text{HS}} \nonumber \\
& \leq   \sqrt{D \epsilon} ~~~ \forall ~~ k,x ,
\label{condrob22n}
\end{align}
Next, we have for all $k,x$
\begin{align}
\Big| \Big|   \mathbbm{1} +   \Big(\widetilde{B}_x^{(k)} \Big)^{\dagger} \, \Big(\widetilde{A}_x^k\Big)^{\dagger} \Big| \Big|_{\text{HS}} 
&= \Big| \Big|   2\mathbbm{1} -\Big[\mathbbm{1} -   \Big(\widetilde{B}_x^{(k)} \Big)^{\dagger} \, \Big(\widetilde{A}_x^k\Big)^{\dagger} \Big] \Big| \Big|_{\text{HS}} \nonumber \\
 & \leq 2 \Big| \Big|   \mathbbm{1} \Big| \Big|_{\text{HS}} + \Big| \Big|  - \Big[ \mathbbm{1} -   \Big(\widetilde{B}_x^{(k)} \Big)^{\dagger} \, \Big(\widetilde{A}_x^k\Big)^{\dagger} \Big] \, \Big| \Big|_{\text{HS}}  \label{line2}\\
& \leq  \sqrt{D} \big(2 + \sqrt{ \epsilon} \big) \label{line4}
\end{align}
where we have used the triangle inequality for the Hilbert-Schmidt norm to get (\ref{line2}). We have also used (\ref{condrob22n}) and the fact that $\Big| \Big|   \mathbbm{1} \Big| \Big|_{\text{HS}} = \sqrt{D}$ to get (\ref{line4}).

In a similar way, it can be shown using (\ref{condrob33n}) that
\begin{align}
  \Big| \Big|   \mathbbm{1} + \widetilde{A}_x^k \, \widetilde{B}_x^{(k)}  \Big| \Big|_{\text{HS}} \leq  \sqrt{D} \big(2 + \sqrt{ \epsilon} \big) ~~~ \forall ~~ k,x .  
  \label{robsep4}
\end{align}
Now, we can obtain the following relation for all $k,x$,
\begin{align}
 &\Big| \Big| \mathbbm{1} - \Big(\widetilde{B}_x^{(k)} \Big)^{\dagger}  \Big(\widetilde{B}_x^{(k)} \Big) \Big| \Big|_{\text{HS}} \nonumber \\
 &=  \Big| \Big| \mathbbm{1} - \Big(\widetilde{B}_x^{(k)} \Big)^{\dagger}  \, \Big(\widetilde{A}_x^k\Big)^{\dagger} 
 \, \Big(\widetilde{A}_x^k\Big) \, \Big(\widetilde{B}_x^{(k)} \Big) \Big| \Big|_{\text{HS}} \label{2line2} \\ 
 & =  \Bigg| \Bigg| \frac{1}{2} \Bigg[\Bigg( \mathbbm{1} +   \Big(\widetilde{B}_x^{(k)} \Big)^{\dagger} \, \Big(\widetilde{A}_x^k\Big)^{\dagger} \Bigg) \Bigg( \mathbbm{1} - \widetilde{A}_x^k \, \widetilde{B}_x^{(k)} \Bigg) + \Bigg( \mathbbm{1} -   \Big(\widetilde{B}_x^{(k)} \Big)^{\dagger} \, \Big(\widetilde{A}_x^k\Big)^{\dagger} \Bigg) \Bigg( \mathbbm{1} + \widetilde{A}_x^k \, \widetilde{B}_x^{(k)} \Bigg) \Bigg] \Bigg| \Bigg|_{\text{HS}} \label{2line3}\\
 & \leq  \frac{1}{2}  \Bigg| \Bigg| \Bigg( \mathbbm{1} +   \Big(\widetilde{B}_x^{(k)} \Big)^{\dagger} \, \Big(\widetilde{A}_x^k\Big)^{\dagger} \Bigg) \Bigg( \mathbbm{1} - \widetilde{A}_x^k \, \widetilde{B}_x^{(k)} \Bigg) \Bigg| \Bigg|_{\text{HS}} + \frac{1}{2}\Bigg| \Bigg| \Bigg( \mathbbm{1} -   \Big(\widetilde{B}_x^{(k)} \Big)^{\dagger} \, \Big(\widetilde{A}_x^k\Big)^{\dagger} \Bigg) \Bigg( \mathbbm{1} + \widetilde{A}_x^k \, \widetilde{B}_x^{(k)} \Bigg)  \Bigg| \Bigg|_{\text{HS}} \label{2line4}\\
 & \leq  \frac{1}{2}  \Bigg| \Bigg| \mathbbm{1} +   \Big(\widetilde{B}_x^{(k)} \Big)^{\dagger} \, \Big(\widetilde{A}_x^k\Big)^{\dagger} \Bigg| \Bigg|_{\text{HS}} \, \Bigg| \Bigg| \mathbbm{1} - \widetilde{A}_x^k \, \widetilde{B}_x^{(k)}  \Bigg| \Bigg|_{\text{HS}} + \frac{1}{2} \Bigg| \Bigg|  \mathbbm{1} -   \Big(\widetilde{B}_x^{(k)} \Big)^{\dagger} \, \Big(\widetilde{A}_x^k\Big)^{\dagger} \Bigg| \Bigg|_{\text{HS}} \, \Bigg| \Bigg| \mathbbm{1} + \widetilde{A}_x^k \, \widetilde{B}_x^{(k)}   \Bigg| \Bigg|_{\text{HS}} \label{2line5}\\
 & \leq D \sqrt{\epsilon} \big(2 + \sqrt{\epsilon} \big) \label{2line6}
\end{align} 
where we have used $\Big(\widetilde{A}_x^k\Big)^{\dagger} \Big(\widetilde{A}_x^k\Big) = \mathbbm{1}$ (since,  $\widetilde{A}_x^k$ is a unitary operator). Next, the triangle inequality for the Hilbert-Schmidt norm is used to get (\ref{2line4}). We obtain (\ref{2line5}) as the Hilbert-Schmidt norm is a submultiplicative norm (i.e., for all $n \times n$ square matrices $A$ and $B$, $|| A B ||_{\text{HS}} \leq || A ||_{\text{HS}} \, \, || B ||_{\text{HS}}$). Finally, we have used  conditions (\ref{condrob33n}), (\ref{condrob22n}), (\ref{line4}), (\ref{robsep4}) to get (\ref{2line6}) from (\ref{2line5}).

Now,  for ideal measurements, Eq.(\ref{p9nnnn}) implies that $\Big(B_x^{(k)}\Big)^{\dagger}  \Big(B_x^{(k)} \Big) =  \mathbbm{1} $ for all $x \in \{1,2\}$ and for all $k \in \{1, \cdots, d-1\}$. Hence, from (\ref{2line6}), we get 
\begin{align}
\Big| \Big| \Big(B_x^{(k)}\Big)^{\dagger}  \Big(B_x^{(k)} \Big) - \Big(\widetilde{B}_x^{(k)} \Big)^{\dagger}  \Big(\widetilde{B}_x^{(k)} \Big) \Big| \Big|_{\text{HS}}  \leq D \sqrt{\epsilon} \big(2 + \sqrt{\epsilon} \big). 
\label{robustness2mainnn}
\end{align}
Next, considering $\rho^{(\mathcal{P})} = \mathbbm{1}/D$, we get the following from the above inequality,
\begin{align}
\Bigg| \Bigg| \left[\Big(B_x^{(k)}\Big)^{\dagger}  \Big(B_x^{(k)} \Big) \right] \rho^{(\mathcal{P})} - \left[\Big(\widetilde{B}_x^{(k)} \Big)^{\dagger}  \Big(\widetilde{B}_x^{(k)} \Big) \right] \rho^{(\mathcal{P})} \Big| \Big|_{\text{HS}}  \leq  \sqrt{\epsilon} \big(2 + \sqrt{\epsilon} \big). 
\label{robustness2main}
\end{align} 
Now, putting $x=1$ and $k=1$, we get \eqref{robustness1new} after some simplification.
Eq.(\ref{robustness1new}) presents another robustness argument of our certification scheme for  $A_3$ and $A_4$.

A similar procedure can be followed in order to establish robustness arguments for $A_1$ and $A_2$. For this, consider
\begin{equation}
    C_1^{(k)} = a_k A_1^{-k} + a_k^* A_2^{-k} ~~ \text {and} ~~
    C_2^{(k)} = \omega_k a_k^* A_1^{-k} + a_k A_2^{-k} .\nonumber
\end{equation}
Then, one can have another sum-of-squares decomposition of the inequality \eqref{temp} as
\begin{align}
&\sum_{k=1}^{d-1} \sum_{x=1}^{2} \Big[ \Big(Q_x^{(k)}\Big)^{\dagger} \, \Big(Q_x^{(k)}\Big) \Big] = 4 (d-1) \, \mathbbm{1} - \hat{\beta}_{\tau_d} \nonumber \\
&\hspace{1.5cm} \text{with} \hspace{0.15cm} Q_x^{(k)} = \mathbbm{1} - A_{x+2}^k \,  C_x^{(k)} \hspace{0.4cm} \forall x \in \{1, 2\}, \, \forall k \in \{1, \cdots, d-1\}.
\end{align}
Through a similar analysis as discussed above, we get the following relation for all $k,x$,
\begin{align}
\Bigg| \Bigg| \left[\Big(C_x^{(k)}\Big)^{\dagger}  \Big(C_x^{(k)} \Big) \right] \rho^{(\mathcal{P})} - \left[\Big(\widetilde{C}_x^{(k)} \Big)^{\dagger}  \Big(\widetilde{C}_x^{(k)} \Big) \right] \rho^{(\mathcal{P})} \Big| \Big|_{\text{HS}}  \leq  \sqrt{\epsilon} \big(2 + \sqrt{\epsilon} \big). 
 \label{robsepn2}
\end{align} 
Now, putting $x=1$ and $k=1$ in (\ref{robsepn2}), we get the robustness expression \eqref{robustness2new} involving $A_1$ and $A_2$.

\end{document}